\documentclass{article}

\usepackage{amsmath}
\usepackage{amsthm}
\usepackage{hyperref}
\usepackage{xr}
\externaldocument{SupplementalAppendices}
\usepackage[page]{appendix}
\usepackage{graphics}
\usepackage{graphicx}
\usepackage{amssymb}
\usepackage{stmaryrd}
\usepackage{txfonts}
\usepackage{subfloat}
\usepackage{caption}
\usepackage{subcaption}
\usepackage{url}
\usepackage{multirow}

\newtheorem{lemma}{Lemma}[section]
\newtheorem{theorem}[lemma]{Theorem}
\newtheorem{proposition}[lemma]{Proposition}

\newtheorem*{theorem*}{Theorem}

\theoremstyle{definition}

\DeclareMathOperator{\Var}{Var}
\DeclareMathOperator{\diag}{diag}

\newcommand{\Stirling}[2]{\genfrac{\{}{\}}{0pt}{}{#1}{#2}_\textrm{sf}}
\newcommand{\stirling}[2]{\genfrac{\{}{\}}{0pt}{}{#1}{#2}}

\begin{document}


  \title{Poisson PCA: Poisson Measurement Error corrected PCA, with Application
  to Microbiome Data} 
  \author{Toby Kenney\thanks{The authors gratefully acknowledge funding from NSERC},
    Tianshu Huang  \and    Hong Gu\thanks{The authors gratefully acknowledge funding from NSERC}\\
Department of Mathematics and Statistics, Dalhousie University}

  \maketitle
\begin{abstract}
In this paper, we study the problem of computing a Principal Component
Analysis of data affected by Poisson noise.  We assume samples are
drawn from independent Poisson distributions. We want to estimate
principle components of a fixed transformation of the latent Poisson
means. Our motivating example is microbiome data, though the methods
apply to many other situations. We develop a semiparametric approach
to correct the bias of variance estimators, both for untransformed and
transformed (with particular attention to log-transformation) Poisson
means. Furthermore, we incorporate methods for correcting different
exposure or sequencing depth in the data. In addition to identifying
the principal components, we also address the non-trivial problem of
computing the principal scores in this semiparametric framework. Most previous
approaches tend to take a more parametric line. For example the
Poisson-log-normal (PLN) model, approach. We compare our method with
the PLN approach and find that our method is better at identifying the
main principal components of the latent log-transformed Poisson means,
and as a further major advantage, takes far less time to compute.
Comparing methods on real data, we see that our method also appears to
be more robust to outliers than the parametric method.
\end{abstract}

\noindent%
{\it Keywords:}  Dimension reduction; Compositional data; Sequencing
depth correction; Count data; Log-normal Poisson
\vfill

\section{Introduction}

Principal component analysis (PCA) is a widely used dimension
reduction and data exploration technique. It is particularly suited
for real valued multivariate data analysis with the underlying
assumption that the measurement error in the data is additive. This
can be seen from the fact that one way to define the PCA is by
minimizing the reconstruction error of the data by the first $k$
principal components with the objective function defined as squared
error loss. For other types of data, this loss function can be
natually extended to be the negative loglikelihood function. This can
be most natually developed to the whole exponential family
distribution, for example, Collins {\em et al.} (2001) and Chiquet
{\em et al.} (2018) maximized the exponential family loglikelihood of the data
by assuming a lower dimensional representation of the natural
parameters through the canonical link functions. Such an approach can
be deemed as totally parametric and thus is sensitive to the model
assumptions, outliers and the choice of the dimension for the latent
parameter space. To some extent, this is more a parametric modelling
method instead of data exploration.

Our focus in this paper is to develop a new PCA method for count data
by correcting Poisson measurement errors, especially for those data
observed from the high throughput sequencing technologies.
Measurement error is an important issue in statistical analysis, that
occurs when we want to analyse some latent variable, but our attempts
to measure the variable are subject to random error. Our framework is
conceptually different from that of Collins {\em et al.} (2001) and
Chiquet {\em et al.} (2018). A closely related approach is the work of
Liu {\em et al.} (2018) who develop the same estimator in the
untransformed case without sequencing depth noise, as we do, but then
proceed to develop shrinkage estimators to improve performance, rather
than deal with sequencing depth noise or transformations. We assume
that the latent Poisson means do not lie exactly on a low-dimensional
space, but instead follow some unknown distribution, and we wish to
estimate the principal components of this distribution or of a non-linear
transformation of this distribution.  When no transformation is
needed, a PCA on an unbiased estimate of the covariance matrix of the
latent Poisson means is straightforward. When a non-linear
transformation is applied on the latent Poisson means, our idea is to
derive a non-linear transformation of the data from which an unbiased
estimator of the covariance matrix of the transformed latent Poisson
means can be derived and thus the PCA can be applied on this unbiased
estimate of the covariance. Our motivation for extending our work to
transformed latent Poisson means is that previous research suggests
that a logarithmic scale is an appropriate way to study microbial
abundance. This also makes sense from the point of view of
differential equation models of species abundance, which tend to imply
exponential-type growth.  We have avoided any parametric constraints
on this latent distribution. Although our research is driven by the
application to microbiome data which we will describe in more detail
in the following subsection, the method is geneally applicable to high
dimensional count data which are popularly observed in many different
areas including social science, ecology, microbiology and bioscience
in general.

The structure of the paper is as follows. In
Section~\ref{MicrobiomeFeatures}, we introduce the features of
microbiome data which motivated this research. In
Section~\ref{Method}, we develop the bias corrected estimators for the
covariance matrices of functions of the latent Poisson means. Then in
Section~\ref{NonlinearSequencingDepth} we develop methods for handling
the sequencing depth issue, particularly in the $\log$-transformed
case. In Section~\ref{Projection}, we provide a method for projecting
the data onto the principal component space. We then demonstrate the
performence of our methods on simulated data in
Section~\ref{Simulations} and on a real microbiome data set (Caporaso
{\em et al}, 2010) in Section~\ref{RealData}. Finally in
Section~\ref{Conclusions}, we summarise our work and suggest promising
directions for future research.

The methods in this paper are all implemented in the \texttt{R}
package \texttt{PoissonPCA}, which is available from the first
author's website \url{www.mathstat.dal.ca/~tkenney/PoissonPCA/}, and
will be submitted to \texttt{CRAN}.

\section{Features of microbiome data}\label{MicrobiomeFeatures}
The microbiome is the collection of all microorganisms in a particular
environment. This collection mostly consists of bacteria, archaea,
eukaryotes and viruses. Recent improvements in technology have allowed
the bacteria and archaea components to be measured en-masse. That is,
identifying DNA from all bacteria and archaea present in a given
sample can be collected and counted, with the objective of giving a
complete picture of the whole microbial community. This picture can
then be compared to metadata, such as whether the microbial community
was from a healthy individual or from an individual suffering from a
particular illness. The idea is that the interaction of the thousands
of microbes present in the whole community may be more informative
than simple analyses that try to attribute complex conditions to the
actions of a single species. 
There have been a large number of papers that
have identified associations between the microbiome and various environmental 
conditions such as nutrient abundance (Arrigo 2005), and human health
(Fujimura {\em et al.} 2010, Sekirov {\em et al.} 2010) including 
obesity (Turnbaugh {\em et al.} 2009), Crohn’s disease (Quince {\em et al.} 2013), 
diabetes (Kostic {\em et al.} 2015). 

The microbiome data present a number of interesting data analysis
challenges that make the application of PCA difficult.  The feature of microbiome data that is the focus of this paper is
measurement error. While each environment has a community with a
certain total abundance of each 
Operational Taxonomic
Unit (OTU), we only observe a small sample
from that community, which for this paper, we will model as following
a Poisson distribution with mean given by the underlying total
abundance. In practice the sequencing procedure is more complicated
(there are several different sequencing procedures that can be used,
each of which involves their own noise and bias --- Gorzelak {\em et al.} (2015) give more details about how
the sequencing procedure influences the data) but the Poisson noise
error seems a reasonable starting point. The aim of this paper is to
adapt PCA to account for this Poisson measurement error.

A further feature of microbiome data is sequencing depth. Essentially,
the issue is that the Poisson means of our observed sample are subject
to large multiplicative noise. For this reason, microbiome data is
usually treated as compositional data, with only the proportions of
each OTU considered, for example Li (2015). However, this loses a lot
of information about the Poisson distribution. For the Poisson
distribution, the relative noise in the observed proportions will
decrease as sequencing depth increases, so samples at higher
sequencing depth should be given more weight. Another commonly used
approach to the sequencing depth problem is rarefaction, where large
parts of the data are thrown away, so that all samples have the same
total OTU count. This practice is mainly due to the use of arbitrary
indices which are sensitive to sample size. McMurdie and Holmes (2014)
argue strongly against the use of rarefaction.

The other features of microbiome data include the high dimensionality
and sparsity. There is substantial literature on ways to deal with
these issues by imposing sparsity constraints on the PCA. For example
Zou {\em et al.} (2006), and Jolliffe {\em et al.} (2003). Salmon {\em
  et al.} (2014) combine the Poisson PCA of Collins {\em et al.}
(2001) with an $l^1$ penalty to create a sparse Poisson PCA, and apply
this method to Photon-limited image denoising.  However, the
sparseness issue is not the focus for the current paper, and we will
avoid it by grouping our OTUs at a higher taxonomic level,
i.e. instead of considering variables that roughly correspond to the
abundance of a particular species of microbe, we will look at
higher-level variables that correspond to the total abundance of a
whole genus or higher taxonomic level. This will reduce the dimension
enough to avoid the curse of dimensionality for the datasets we
consider. In principle the $l^1$ penalty could be applied to our
method to provide a sparse Poisson-error corrected PCA, but that is a
topic for future work.

There has been other work incorporating the Poisson error structure of
microbiome data into the analysis, but it has been focused on
parametric modelling, rather than Exploratory data analysis. For
example Holmes {\em et al.} (2012) model microbiome data as a
Dirichlet multinomial distribution; Knights {\em et al.} (2011)
use a Bayesian mixture model; Shafiei {\em et al.} (2014) and Shafiei {\em et al.}  (2015) develop hierarchical Bayesian multinomial mixture models; and Cai {\em et al.} (2017) apply
non-negative matrix factorisation with Poisson error.

\section{Method}\label{Method}

 We express the situation more formally as
follows: We have $n$ samples of a $p$-dimensional random vector $\mathbf{X}$
which follows a Poisson distribution with conditional mean given by a
latent random vector $\mathbf{\Lambda}$ and a ``sequencing
depth'' $S$ (this comes from our specific application to microbiome
data). That is, we have $\mathbf{X}_i|\mathbf{\Lambda}_i, S \sim Po(\mathbf{\Lambda}_iS)$,  $\mathbf{X}_i$'s $(i=1,\cdots, p)$ are
conditionally independent given the random vector $\mathbf{\Lambda}$
and the sequencing depth $S$. 
$\mathbf{X}$ can be thought of as $\mathbf{\Lambda}$ with some Poisson measurement
error. Our samples form an $n\times p$ matrix $X$ of observed data,
where the $i$th row is an independent sample of the random vector
$\mathbf{X}$. We want to estimate the distribution
of $\mathbf{\Lambda}$ and its relationship to other variables. For this
purpose, we develop a method of estimating a PCA on this latent
$\mathbf{\Lambda}$, or on some transformation of $\mathbf{\Lambda}$, $f(\mathbf{\Lambda})$. Our approach to
this problem is moment-based --- we aim to correct the bias in the
variance estimates, and use the unbiased variance estimates to perform
PCA.

We start with the linear case, where we want to estimate the PCA of
$\mathbf{\Lambda}$. There are two cases to consider: the case without
sequencing depth noise (where $S_i=1$ for all $i=1, \cdots, n$), and the case where
sequencing depth correction is needed. We then develop the methods for estimating the PCA of a nonlinear transformation $f(\mathbf{\Lambda})$.  

\subsection{Poisson Noise Correction in PCA without Sequencing Depth Corrections}\label{LinearCase}

We first consider the simplest case, where $S_i=1$ for all $i=1, \cdots, n$. That
is, any differences in total abundance between samples is a result of
differences in underlying latent abundances. The model here is fairly
straightforward. Latent Poisson means are a random vector $\mathbf{\Lambda}$,
and observed counts are a random vector $\mathbf{X}$ with
$\mathbf{X}_i|\mathbf{\Lambda}_i\sim Po(\mathbf{\Lambda}_{i})$ conditionally independent given $\mathbf{\Lambda}$. We want to
estimate the principal components of the random vector $\mathbf{\Lambda}$ from
a matrix $X$ whose rows are independent samples of the random vector
$\mathbf{X}$. The principal components of $\mathbf{\Lambda}$ are the eigenvectors of the variance
of the vector $\mathbf{\Lambda}$, so we only need to provide an estimate for
the variance of the random vector $\mathbf{\Lambda}$. We have
\begin{align*}
  \Var(\mathbf{X})&={\mathbb E}\left(\Var(\mathbf{X}|\mathbf{\Lambda})\right)+\Var\left({\mathbb
    E}(\mathbf{X}|\mathbf{\Lambda})\right)\\
&={\mathbb E}\left(\diag(\mathbf{\Lambda})\right)+\Var\left(\mathbf{\Lambda}\right)\\
\Var\left(\mathbf{\Lambda}\right)&= \Var(\mathbf{X})-{\mathbb
  E}\left(\diag(\mathbf{\Lambda})\right)\\
&= \Var(\mathbf{X})-\diag\left({\mathbb E}(\mathbf{X})\right)\\
\end{align*}
where $\diag(\mathbf{\Lambda})$ is a diagonal matrix with entries
$\mathbf{\Lambda}$.
Now we have the following straightforward estimators for $\Var(\mathbf{X})$ and
$\diag\left({\mathbb E}(\mathbf{X})\right)$:
\begin{align*}
  \widehat{{\mathbb E}(\mathbf{X})}&=\frac{1}{n}X^T1\\
  \widehat{\Var(\mathbf{X})}&=\frac{X^TX}{n-1}-\frac{X^T11^TX}{n(n-1)}\\
\end{align*}
plugging these in gives the estimator 
$$\widehat{\Var\left(\mathbf{\Lambda}\right)}=\frac{1}{n-1}X^TX-\frac{1}{n(n-1)}X^T11^TX-\diag\left(\frac{1}{n}X^T1\right)$$

\subsection{Poisson Noise Correction in PCA with Sequencing Depth Corrections}

Suppose we now add sequencing depth noise to the problem, so that the
Poisson mean is $S\mathbf{\Lambda} $, where $S$ is a random scalar and
$\mathbf{\Lambda}$ is a random vector. To avoid the identifiability issue we assume $\mathbf{\Lambda}^T 1=1$, and that $\mathbf{\Lambda}$ and
$S$ are independent. We can apply the same procedure
\begin{align*}
  \Var(\mathbf{X}|S)&={\mathbb E}\left(\Var(\mathbf{X}|\mathbf{\Lambda},S)\right)+\Var\left({\mathbb
    E}(\mathbf{X}|\mathbf{\Lambda},S)\right)\\
&={\mathbb E}\left(\diag(\mathbf{\Lambda} S)|S\right)+\Var\left(\mathbf{\Lambda} S|S\right)\\
&=S{\mathbb E}\left(\diag(\mathbf{\Lambda})|S\right)+S^2\Var\left(\mathbf{\Lambda}|S \right)
\end{align*}
Thus 
\begin{align*}
\Var\left(\mathbf{\Lambda}\right)={\mathbb E}\Var\left(\mathbf{\Lambda}|S \right)&= {\mathbb E}\Var\left(\frac{\mathbf{X}}{S}\middle |S\right)-{\mathbb E}{\mathbb
  E}\left(\diag\left(\frac{\mathbf{\Lambda}}{S}\right)\middle|S\right)\\
&= {\mathbb E}\Var\left(\frac{\mathbf{X}}{S}\middle | S\right)-{\mathbb E}\diag\left({\mathbb E}\left(\frac{\mathbf{X}}{S^2}\right)\middle|S\right)\\
&={\mathbb E}\Var\left(\frac{\mathbf{X}}{S}\middle | S\right)-\diag\left({\mathbb E}\left(\frac{\mathbf{X}}{S^2}\right)\right)\\
\end{align*}
Now since $\mathbf{X}_i|\mathbf{\Lambda}_i, S \sim Po(\mathbf{\Lambda}_iS)$, $(i=1,\cdots, p)$, we have 
\begin{align*}
\Var\left(\frac{\mathbf{X}}{S}\right)&={\mathbb
  E}\left(\Var\left(\frac{\mathbf{X}}{S}\middle | S\right)\right)+\Var\left({\mathbb E}\left(\frac{\mathbf{X}}{S}\middle |
S\right)\right)\\
&={\mathbb  E}\left(\Var\left(\frac{\mathbf{X}}{S}\middle |
S\right)\right)+\Var\left(\frac{{\mathbb E}(\mathbf{\Lambda}) S}{S}\middle |S\right)\\
&={\mathbb  E}\left(\Var\left(\frac{\mathbf{X}}{S}\middle |
S\right)\right)
\end{align*}
Thus 
\begin{align*}
\Var\left(\mathbf{\Lambda}\right)&= \Var\left(\frac{\mathbf{X}}{S}\right)-\diag\left({\mathbb E}\left(\frac{\mathbf{X}}{S^2}\right)\right)\\
\end{align*}
We can now empirically estimate this quantity from a sample. Suppose
we have a matrix $X$, where each row is a sample from the underlying
distribution of $\mathbf{X}$. If the sequencing depth $S$ is known, then we can
compute $\frac{X}{S}$ for each sample, and we have empirical
estimators
\begin{align*}
  \widehat{\Var\left(\frac{\mathbf{X}}{S}\right)}&=\frac{1}{n-1}\left(X^TD^{-2}X-\frac{1}{n}X^TD^{-1}11^TD^{-1}X\right)\\
\widehat{{\mathbb
    E}\left(\frac{\mathbf{X}}{S^{2}}\right)}&=  \frac{1}{n}X^TD^{-2}1\\
\end{align*}
where $D$ is a diagonal matrix with entries the sequencing depths
of each observation. Plugging this into our formula for
$\Var(\mathbf{\Lambda})$ gives us the estimator

\begin{align*}
\widehat{\Var\left(\mathbf{\Lambda}\right)}&= \frac{1}{n-1}\left(X^TD^{-2}X-\frac{1}{n}X^TD^{-1}11^TD^{-1}X\right)-\frac{1}{n}\diag\left(X^TD^{-2}1\right)\\
\end{align*}

The correction here is similar to the correction in (Li, Palta and
Shao, 2004), where they added a correction term
$\frac{{\beta_1}^2}{2{\sigma_\epsilon}^2}\sum_{i=1}^N
\frac{W_i}{{L_i}^2}$ to the likelihood term in a regression model,
where $W_i$'s are the Poisson samples, and $L_i$'s are the sequencing
depths. Their case is a regression model, and in their case the
sequencing depth is more directly observed (being number of hours of sleep).

\subsection{Poisson PCA on Transformed Poisson Means without Sequencing Depth Corrections}\label{TransformedMeans}

Suppose now that we are interested in the covariance matrix of some
transformation $f(\mathbf{\Lambda})$, where $f$ is some univariate
function applied elementwise to $\mathbf{\Lambda}$. For example, we
might set $f(\mathbf{\Lambda})=\log(\mathbf{\Lambda})$. We can use the
same technique as in Section~\ref{LinearCase}, but with some
univariate transformation $g(\mathbf{X})$ of the original data. Using
the law of total variance, we have:
$$\Var(g(\mathbf{X}))={\mathbb
  E}\left(\Var(g(\mathbf{X})|\Lambda)\right)+\Var\left({\mathbb
  E}(g(\mathbf{X})|\Lambda)\right)$$
Now if we can find $g(\mathbf{X})$ so that ${\mathbb
  E}(g(\mathbf{X})|\mathbf{\Lambda})=f(\mathbf{\Lambda})$ then we get
$$\Var(f(\mathbf{\Lambda}))=\Var(g(\mathbf{X}))-{\mathbb
  E}\left(\Var(g(\mathbf{X})|\mathbf{\Lambda})\right)$$

The first step, therefore is
finding $g(\mathbf{X})$ such that ${\mathbb E}(g(\mathbf{X})|\mathbf{\Lambda})=f(\mathbf{\Lambda})$. Without loss of generality, we take $\mathbf{\Lambda}$ as univariate. We can directly calculate ${\mathbb
  E}(g(\mathbf{X})|\mathbf{\Lambda})$ as follows:
\begin{align*}
  {\mathbb E}(g(\mathbf{X})|\mathbf{\Lambda}=\lambda)&=e^{-\lambda}\sum_{n=0}^\infty \frac{\lambda^n g(n)}{n!}= f(\lambda)\\
 e^{\lambda}f(\lambda)&=\sum_{n=0}^\infty  \frac{g(n)}{n!} {\lambda}^n\\
  \end{align*}


That is, we can take $\frac{g(n)}{n!}$ to be the coefficients of the
Taylor series of $e^{\lambda}f(\lambda)$. As an example, suppose we
want $g(\mathbf{X})$ to give an unbiased estimator for
$f(\mathbf{\Lambda})=\mathbf{\Lambda}^k$ for some fixed $k$. We have 
\begin{align*}
 e^{\lambda}\lambda^k&=\lambda^k\sum_{n=0}^\infty  \frac{\lambda^n}{n!} 
 =\sum_{m=k}^\infty  \frac{\lambda^{m}}{m!}\times\frac{m!}{(m-k)!} \\
 \end{align*}
We therefore get our estimator $g(\mathbf{X})=\mathbf{X}(\mathbf{X}-1)\cdots(\mathbf{X}-k+1)$, which is a
well-known unbiassed estimator for $\lambda^k$. 

Next we need to calculate the conditional variance
$\Var(g(\mathbf{X})|\mathbf{\Lambda})$. Since the elements of $\mathbf{X}$ are conditionally independent given 
$\mathbf{\Lambda}$, this matrix is a diagonal matrix and it will suffice to take $\mathbf{\Lambda}$ as univariate for the estimation of each element of this matrix.  We can estimate this from first principles:
\begin{align*}
  \Var(g(\mathbf{X})|\mathbf{\Lambda}=\lambda)&={\mathbb E}(g(\mathbf{X})^2|\mathbf{\Lambda}=\lambda)-{\mathbb E}(g(\mathbf{X})|\mathbf{\Lambda}=\lambda)^2\\
  &=e^{-\lambda}\sum_{n=0}^\infty
  \frac{\lambda^n}{n!}g(n)^2-e^{-2\lambda}\left(\sum_{n=0}^\infty \frac{\lambda^n}{n!}g(n)\right)^2\\
  &=\sum_{n=0}^\infty g(n)^2  e^{-\lambda}\frac{\lambda^n}{n!}-e^{-2\lambda}\left(\sum_{n=0}^\infty\sum_{m=0}^\infty \frac{\lambda^n}{n!}\frac{\lambda^m}{m!}g(m)g(n)\right)\\
  &=\sum_{n=0}^\infty g(n)^2
  e^{-\lambda}\frac{\lambda^n}{n!}-\sum_{n=0}^\infty h(n) e^{-2\lambda}\frac{\lambda^n}{n!}\\
\end{align*}
where $h(n)=\sum_{m=0}^n \binom{n}{m}g(m)g(n-m)$ is the coefficient of
the Taylor series $e^{2\lambda}f(\lambda)^2=\sum_{n=0}^\infty h(n)\frac{\lambda^n}{n!}$.
Since $\mathbf{\Lambda}$ is latent, we need to replace the expressions 
$e^{-\lambda}  \frac{\lambda^n}{n!}$ and $e^{-2\lambda}
\frac{\lambda^n}{n!}$ by estimators which are functions of $\mathbf{X}$. Since
we will be taking the expectation of our estimators of the conditional
variance, it makes sense to focus on the bias of these estimators. For
$e^{-\lambda}  \frac{\lambda^n}{n!}$, there is an unbiassed estimator
$$s_n(\mathbf{X})=\left\{\begin{array}{ll}1&\textrm{if }\mathbf{X}=n\\0&\textrm{otherwise}\end{array}\right.$$
which can reasonably be used. This also has a computational advantage
because it results in only needing to compute a single term in the
sum. For the term $e^{-2\lambda}  \frac{\lambda^n}{n!}$, the unbiassed
estimator is $$t_n(\mathbf{X})=(-1)^{\mathbf{X}-n}{\mathbf{X}\choose n}$$
This gives us the overall estimator for the variance of $f(\mathbf{\Lambda_l})$, where $\mathbf{\Lambda_l}$ is an element of the vector $\mathbf \Lambda$.
\begin{align}
\nonumber
\widehat{\Var\left(f(\mathbf{\Lambda_l})\right)}&=\widehat{\Var\left(g(\mathbf{X_l})\right)}-
\widehat{{\mathbb E_{\mathbf{\Lambda_l}}}\Var\left(g(\mathbf{X_l}|\mathbf{\Lambda_l})\right)}\\
\nonumber&=\frac{1}{n-1}\sum_{i=1}^ng(X_{il})^2-\frac{1}{n(n-1)}\left(\sum_{i=1}^n
g(X_{il})\right)^2-\frac{1}{n}\sum_{i=1}^n\left(\sum_{k=0}^\infty g(k)^2s_k(X_{il})-\sum_{k=0}^\infty
h(k)t_k(X_{il})\right) \\
  &=\frac{1}{n}\sum_{i=1}^n\sum_{k=0}^{X_{il}}
(-1)^{X_{il}-k}{X_{il}\choose k}h(k)-\frac{1}{n(n-1)}\left(\sum_{\stackrel{i\ne j}{i,j=1}}^n g(X_{il})g(X_{jl})\right)\label{TransformedVarianceEstimator}
\end{align}
%
The estimates of the off-diagonal elements of the covariance matrix can be computed as the corresponding 
elements of the sample covariance matrix of $g(X)$, where $g$ is applied elementwise
to the observed data matrix $X$.

In Appendix~\ref{ConsistencyProof}, we show the consistency of this estimator:

\begin{theorem}\label{Consistency}
If $e^{2\lambda}f(\lambda)^2=\sum_{n=0}^\infty
\frac{h(n)\lambda^n}{n!}$ is a globally convergent Taylor series,
$\mathbf{\Lambda}$ is a non-negative random variable with finite raw moments $\mu_n$
and $\sum_{n=0}^\infty\frac{\mu_n|h_n|}{n!}$ converges, then the
estimator \eqref{TransformedVarianceEstimator} is consistent.
\end{theorem}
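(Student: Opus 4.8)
The plan is to rewrite the estimator \eqref{TransformedVarianceEstimator} as a difference of two averages, $A_n - B_n$, where $A_n = \frac{1}{n}\sum_{i=1}^n T(X_{il})$ with $T(x) = \sum_{k=0}^{x}(-1)^{x-k}\binom{x}{k}h(k) = \sum_k h(k)t_k(x)$, and $B_n = \frac{1}{n(n-1)}\sum_{i\ne j}g(X_{il})g(X_{jl})$. Since the entries $X_{1l},\dots,X_{nl}$ are i.i.d., $A_n$ is an ordinary sample mean and $B_n$ is a $U$-statistic of degree two with kernel $g(x)g(y)$. The strategy is to show, via the strong law of large numbers in its ordinary and its $U$-statistic form, that $A_n \to \mathbb{E}(f(\mathbf{\Lambda}_l)^2)$ and $B_n \to (\mathbb{E}f(\mathbf{\Lambda}_l))^2$ almost surely, so that $A_n - B_n \to \mathbb{E}(f(\mathbf{\Lambda}_l)^2) - (\mathbb{E}f(\mathbf{\Lambda}_l))^2 = \Var(f(\mathbf{\Lambda}_l))$, which is the claim.

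For $A_n$ the first step is to bound $\mathbb{E}|T(\mathbf{X}_l)|$. Because $|t_k(\mathbf{X}_l)| = \binom{\mathbf{X}_l}{k}$ and $\mathbb{E}\binom{\mathbf{X}_l}{k} = \mathbb{E}(\mathbf{\Lambda}_l^k/k!) = \mu_k/k!$, Tonelli's theorem (all terms nonnegative) gives $\mathbb{E}\sum_k |h(k)|\,|t_k(\mathbf{X}_l)| \le \sum_k |h(k)|\mu_k/k!$, which is finite by hypothesis. This single bound is the technical heart of the argument: it simultaneously establishes $\mathbb{E}|T(\mathbf{X}_l)| < \infty$, so that the SLLN applies to $A_n$, and it licenses the interchange of summation and expectation needed to identify the limit. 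Using the unbiasedness $\mathbb{E}(t_k(\mathbf{X}_l)\mid \mathbf{\Lambda}_l) = e^{-2\mathbf{\Lambda}_l}\mathbf{\Lambda}_l^k/k!$ together with the \emph{global} convergence of $\sum_k h(k)\lambda^k/k! = e^{2\lambda}f(\lambda)^2$, dominated convergence then yields $\mathbb{E}(T(\mathbf{X}_l)\mid \mathbf{\Lambda}_l) = e^{-2\mathbf{\Lambda}_l}\sum_k h(k)\mathbf{\Lambda}_l^k/k! = f(\mathbf{\Lambda}_l)^2$, and hence $A_n \to \mathbb{E}f(\mathbf{\Lambda}_l)^2$. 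For $B_n$, the $U$-statistic SLLN gives $B_n \to (\mathbb{E}g(\mathbf{X}_l))^2$ provided $\mathbb{E}|g(\mathbf{X}_l)| < \infty$, and then $\mathbb{E}g(\mathbf{X}_l) = \mathbb{E}f(\mathbf{\Lambda}_l)$ by the construction $\mathbb{E}(g(\mathbf{X}_l)\mid \mathbf{\Lambda}_l) = f(\mathbf{\Lambda}_l)$ from Section~\ref{TransformedMeans}. Combining the two limits gives $\Var(f(\mathbf{\Lambda}_l))$; the off-diagonal covariance entries are ordinary sample covariances of $g(\mathbf{X})$ and converge by the same reasoning.

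I expect the main obstacle to be twofold. The clean part, where the hypothesis $\sum_n \mu_n|h_n|/n! < \infty$ does its work, is the dominated-convergence interchange for $A_n$: one must be careful that the crude bound $|t_k(\mathbf{X}_l)| = \binom{\mathbf{X}_l}{k}$ is precisely what converts the $h$-condition into integrability, and that the \emph{global} convergence of the $h$-series is exactly what forces the conditional limit to equal $f(\mathbf{\Lambda}_l)^2$ rather than a truncation of it. The more delicate point is securing $\mathbb{E}|g(\mathbf{X}_l)| < \infty$ for $B_n$. Since $\mathbb{E}(|g(\mathbf{X}_l)|\mid \mathbf{\Lambda}_l) \le e^{-\mathbf{\Lambda}_l}\sum_n |g(n)|\mathbf{\Lambda}_l^n/n!$, this reduces to $\sum_n \mu_n|g_n|/n! < \infty$, and because $h$ is the binomial self-convolution of $g$, this does \emph{not} follow from the $h$-condition by a one-line triangle-inequality estimate, as that inequality runs in the wrong direction. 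I would therefore control the growth of $g(n)$ directly from the fact that $e^{\lambda}f(\lambda) = \sum_n g(n)\lambda^n/n!$ is entire (equivalent to global convergence of the $h$-series), combined with the finiteness of all raw moments $\mu_n$; this coupling of coefficient growth against moment growth is the step I expect to require the most care.
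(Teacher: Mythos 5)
Your proposal follows essentially the same route as the paper: unbiasedness plus the strong law of large numbers (in ordinary form for the $h$-term and in $U$-statistic form for the cross term), hinging on the identical key bound $\mathbb{E}\sum_k|h(k)|\binom{\mathbf{X}}{k}=\sum_k\mu_k|h(k)|/k!$ obtained by Tonelli/Fubini. The one point you flag as delicate --- establishing $\mathbb{E}|g(\mathbf{X})|<\infty$ for the $U$-statistic --- is handled in the paper only by the one-line assertion that $\mathbb{E}(g(\mathbf{X}))=\mathbb{E}(f(\mathbf{\Lambda}))$ must be finite because the target variance is finite, so on that step you are, if anything, more careful than the published argument.
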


In practice, this estimator can have high variance, leading to poor
finite-sample performance for certain transformations. Sometimes this
issue can be alleviated by using suitable approximations, and
simplifying the algebra in
equation~\eqref{TransformedVarianceEstimator}. We show how this can be
done for the commonly used log transformation in
Appendix~\ref{Computingh}. We also suggest a possible approach to the
more general case, that could be given further study in Appendix~\ref{generallowvarianceCVestimator}.

\subsection{The Special Case of Log Transformation}

In the special case when $f(\lambda)=\log(\lambda)$, the approach from
Section~\ref{TransformedMeans} does not work, because $f$ does
not have a globally convergent Taylor series. We can compute a Taylor
series for $\log(\lambda)=\log\left(a\left(1+\frac{\lambda-a}{a}\right)\right)$
for some value of $a$. That is, we get
$$\log(\lambda)=\log(a)+\sum_{n=1}^\infty
\frac{(-1)^n}{n}\left(\frac{\lambda}{a}-1\right)^n$$ This Taylor
series is convergent for $0<\lambda<2a$, however, when we expand the terms
and try to reorganise it as a power series in $\lambda$, the coefficients do
not converge. Therefore, we can only use a truncated power
series. That is, we truncate the above series at some chosen value
$N$, to get a polynomial. This produces a reasonable approximation to
$\log(\lambda)$ on the interval $(0,2a)$. For larger values of
$\lambda$, we have that ${\mathbb E}(\log(\mathbf{X})|\mathbf{\Lambda}=\lambda)\approx\log(\lambda)$. Therefore,  we set $g(X)=\log(X)$ instead for larger values of $X$. The
result of this is that ${\mathbb E}(g(\mathbf{X})|\mathbf{\Lambda})\approx\log(\mathbf{\Lambda})$
for $\mathbf{\Lambda}$ not too close to zero. 
In summary our function $g(n)$ is given by
$$g(n)=\left\{\begin{array}{ll}\log(a)+\sum_{i=1}^N\frac{1}{i}&\textrm{if }n=0\\
\log(a)+\sum_{i=1}^N\frac{1}{i}+\sum_{i=1}^{n\land N}\frac{(-1)^in!}{a^i(n-i)!}\sum_{m=i}^N\frac{\binom{m}{i}}{m}&\textrm{if }1\leqslant n<l_0\\
\log(n)&\textrm{if }n\geqslant l_0
\end{array}\right.$$

There is some difficulty choosing
the values of $a$ and $N$. From experiments, we see that $a=3, N=4$ and $g(X)=\log(X)$ whenever
$X\geqslant 7$, gives a fairly good estimator. We discuss the
corresponding estimator for conditional variance in
Appendix~\ref{Computingh}.

\section{Sequencing Depth Correction}\label{NonlinearSequencingDepth}

The methods described above for variance estimators of transformed
latent Poisson means cover the situation without sequencing depth
correction. In this section, we give two approaches to correcting the
resulting covariance matrix for sequencing depth.

\subsection{Method 1: Compositional Covariance Matrix for Sequencing Depth Correction}\label{CompositionalCovariance}
The first method is to get a
compositional covariance matrix. That is, suppose the estimated
covariance matrix is $\Sigma$. We want to find the matrix
$\tilde{\Sigma}$ with the following properties:

\begin{enumerate}

 \item For any $u,v$ with $u^T1=v^T1=0$, we have
   $u^T\tilde{\Sigma}v=u^T\Sigma v$

 \item $\tilde{\Sigma}$ is symmetric.

 \item $\tilde{\Sigma}1=0$
   
\end{enumerate}

Conditions~2 and 3 are necessary for $\tilde\Sigma$ to be the
covariance matrix of compositional data. Condition~1 says that the
covariance matrices $\tilde\Sigma$ and $\Sigma$ give the same
covariance for the projection of $f(\mathbf{\Lambda})$ onto the space
orthogonal to the vector $1$. Now to solve this, we have the
following.

\begin{proposition}\label{CompositionalSeqDepthCorrection}
  The matrix $\tilde\Sigma_{p \times p}$ that satisfies Conditions~1--3 is given by
  $$\tilde\Sigma=\Sigma-a1^T-1a^T$$
  where $a=(pI+11^T)^{-1}\Sigma 1$.
\end{proposition}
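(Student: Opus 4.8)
The plan is to derive the formula directly from Conditions 1--3 rather than merely verifying it, so that existence and uniqueness come out together. The key reformulation is to work with the perturbation $M := \tilde\Sigma - \Sigma$. Condition 1 says exactly that $u^T M v = 0$ for every pair $u,v$ orthogonal to $1$; writing $P = I - \tfrac{1}{p}11^T$ for the orthogonal projection onto $1^\perp$, this is equivalent to $PMP = 0$. First I would establish the structural fact that a \emph{symmetric} matrix $M$ satisfies $PMP = 0$ if and only if $M = -a1^T - 1a^T$ for some vector $a$. The ``if'' direction is immediate, since $u^T(a1^T + 1a^T)v = (u^Ta)(1^Tv) + (u^T1)(a^Tv)$ vanishes whenever $u^T1 = v^T1 = 0$. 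For the ``only if'' direction I would expand $M = (P + \tfrac1p 11^T)\,M\,(P + \tfrac1p 11^T)$ and observe that every resulting term except $PMP$ carries a factor of $1$ on the left or the right; hence $M$ is a sum of matrices of the form $1x^T$ and $y1^T$, and imposing $M = M^T$ collapses this to $-a1^T - 1a^T$ (a pure $11^T$ term being the special case $a \propto 1$). This reduces the whole problem to pinning down the single vector $a$.

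With the form of $\tilde\Sigma$ so constrained, Condition 2 holds automatically because $(a1^T)^T = 1a^T$ and $\Sigma$ is symmetric, so it remains only to impose Condition 3. Computing
\begin{equation*}
\tilde\Sigma 1 = \Sigma 1 - a(1^T1) - 1(a^T1) = \Sigma 1 - pa - 11^T a = \Sigma 1 - (pI + 11^T)a ,
\end{equation*}
we see that $\tilde\Sigma 1 = 0$ is equivalent to the linear system $(pI + 11^T)a = \Sigma 1$. I would then verify that $pI + 11^T$ is invertible: since $11^T$ acts as $p$ on $\mathrm{span}(1)$ and as $0$ on $1^\perp$, the matrix $pI + 11^T$ has eigenvalues $2p$ and $p$, both positive, hence is positive definite. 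Solving the system yields the stated $a = (pI + 11^T)^{-1}\Sigma 1$, and because the reduction above showed $M$ must take the form $-a1^T - 1a^T$ with this $a$ forced, the resulting $\tilde\Sigma$ is unique.

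I do not anticipate a serious obstacle; once the right viewpoint is adopted the remainder is routine linear algebra. The one step requiring genuine care is the structural claim that Condition 1 forces the symmetric perturbation $M = \tilde\Sigma - \Sigma$ into the family $\{-a1^T - 1a^T\}$ --- that is, that a symmetric matrix invisible to all test vectors in $1^\perp$ must be a rank-at-most-two matrix built from the all-ones vector. After that, Condition 2 is free, Condition 3 becomes the linear system $(pI+11^T)a = \Sigma 1$, and the positive-definiteness of $pI + 11^T$ delivers both existence and uniqueness of $a$, and therefore of $\tilde\Sigma$.
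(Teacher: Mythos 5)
Your proof is correct and takes essentially the same route as the paper's: both arguments show that Condition~1 together with symmetry forces the perturbation $\tilde\Sigma-\Sigma$ into the family $\{-a1^T-1a^T\}$ (you via the projector expansion $M=(P+\tfrac1p 11^T)M(P+\tfrac1p 11^T)$ with $PMP=0$, the paper via representing the scalar $\alpha(v)$ as $a^Tv$ and peeling off the residual $b1^T$), and then both reduce Condition~3 to the linear system $(pI+11^T)a=\Sigma 1$. Your explicit verification that $pI+11^T$ is positive definite, with eigenvalues $p$ and $2p$, is a small point the paper leaves implicit until it computes the inverse afterward.
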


\begin{proof}
We first note that for any $v$ with $v^T1=0$, we have
$\tilde{\Sigma}v-\Sigma v=\alpha(v) 1$ for some scalar $\alpha(v)$. This
is because $u^T\left(\tilde{\Sigma}v-\Sigma v\right)=0$ for any $u$
with $u^T1=0$. By linearity, $\alpha(v)$ is a linear function of $v$,
so we have $\alpha(v)=a^Tv$ for some fixed vector $a$. Now we have
$\left(\tilde\Sigma-\Sigma-1a^T\right)v=0$ for all $v$ with
$v^T1=0$. If we let $\Sigma_r=\tilde\Sigma-\Sigma-1a^T$, then this
gives $\Sigma_r x=\Sigma_r \frac{1}{p}11^Tx$ because
$\frac{1}{p}11^Tx$ is the orthogonal projection of $x$ onto the
1-dimensional space spanned by 1. Therefore we have $\Sigma_r
=\Sigma_r \frac{11^T}{p}=b1^T$ where $b=\frac{\Sigma_r 1}{p}$. Thus we
have $\tilde\Sigma-\Sigma-1a^T=b1^T$ so
$\tilde\Sigma=\Sigma+1a^T+b1^T$ Since $\tilde\Sigma$ and $\Sigma$ are
both symmetric, we get $a=b$, which gives the form in the
proposition. All that now remains is to find the value of $a$. This is
forced by $\tilde\Sigma 1=0$ which gives
\begin{align*}
\left(\Sigma-a1^T-1a^T\right)1&=0\\
\Sigma 1-a1^T1-1a^T 1&=0\\
\Sigma 1-pa-11^T a&=0\\
(pI+11^T)a&=\Sigma 1\\
a&=(pI+11^T)^{-1}\Sigma 1
\end{align*}
\end{proof}

We have that $(pI+11^T)(cI+d11^T)=cpI+(c+pd+pd)11^T$, so solving for
$cp=1$ and $c+2pd=0$ we get that
$(pI+11^T)^{-1}=\frac{I}{p}-\frac{11^T}{2p^2}$, so
$$\widetilde{\Sigma}=\Sigma-a1^T-1a^T=\Sigma-\frac{1}{p}\Sigma
11^T+\frac{11^T}{2p^2}\Sigma 11^T-11^T\frac{\Sigma}{p}+11^T\Sigma\frac{11^T}{2p^2}=\left(I-\frac{11^T}{p}\right)\Sigma\left(I-\frac{11^T}{p}\right)$$

\subsection{Method 2: Minimum Variance by Correcting for the Maximum Sequencing Depth Noise}\label{MinvarCorrection}

The second method is specific to the $\log(\mathbf{\Lambda})$ case. In this case,
the sequencing depth becomes additive noise on $\log(\mathbf{\Lambda})$,
proportional to the vector 1. This means that $\Sigma=\Sigma^*+c11^T$.
The value of $c$ is not identifiable, but it is limited by the
constraint that $\Sigma^*$ should be non-negative definite. One
natural choice is therefore to select the largest value of $c$ such
that $\Sigma^*$ remains non-negative definite. In the case where $1$
is an eigenvector of $\Sigma$, this will give exactly the same result
as the compositional method. In other cases, the results may differ
slightly. 

To solve for the largest value of $c$ such that $\Sigma-c11^T$ is non-negative definite, we consider the determinant
$\left|\Sigma-c11^T\right|$ as a continuous function of $c$. While $\Sigma-c11^T$ is positive
definite, the determinant is positive. We therefore set $c$ to be the
smallest solution to $\left|\Sigma-c11^T\right|=0$. We solve this with
a change of basis $\Sigma=ASA^T$ where $A$ is an orthogonal matrix
with $Ae_1=\frac{1}{\sqrt{p}}1$, where $e_1=(1, 0, \cdots, 0)^T$. Now we have that
\begin{align*}
  \left|\Sigma-c11^T\right|&=\left|A\left(S-cpe_1{e_1}^T\right)A^T\right|\\
  &=\left|S-cpe_1{e_1}^T\right|\\
  &=\left|S\right|-cp\left|S_{\hat{1}\hat{1}}\right|\\
\end{align*}
where $S_{\hat{1}\hat{1}}$ is the matrix $S$ with the first row and
column removed and the first entry replaced by $1$. For example, if
$$S=\left(\begin{array}{lll}4 & 5 & 6\\5 & 7 & 9\\6 & 9 &
  12\end{array}\right)$$
  then $$S_{\hat{1}\hat{1}}=\left(\begin{array}{lll}1 & 0 & 0\\0 & 7 & 9\\0 & 9 &
  12\end{array}\right)$$
  More formally $S_{\hat{1}\hat{1}}=S-e_1s^T-s{e_1}^T+(S_{11}+1)e_1{e_1}^T$ where
  $s=Se_1$ is the first column of $S$. This gives
\begin{align*}
  S_{\hat{1}\hat{1}}&=
  S-e_1{e_1}^TS-Se_1{e_1}^T+({e_1}^TSe_1+1)e_1{e_1}^T\\
  &=S-e_1{e_1}^TS-Se_1{e_1}^T+e_1{e_1}^TSe_1{e_1}^T+e_1{e_1}^T\\
\end{align*}
This gives us that 
\begin{align*}
  AS_{\hat{1}\hat{1}}A^T&=A\left(S-e_1{e_1}^TS-Se_1{e_1}^T+e_1{e_1}^TSe_1{e_1}^T+e_1{e_1}^T\right)A^T\\
  &=\Sigma-\frac{1{1}^T}{p}\Sigma-\Sigma \frac{11^T}{p}+\frac{11^T}{p}\Sigma \frac{11^T}{p}+\frac{11^T}{p}\\
  &=\left(I-\frac{1{1}^T}{p}\right)\Sigma\left(I-\frac{1{1}^T}{p}\right)+\frac{11^T}{p}\\
  &=\widetilde{\Sigma}+\frac{11^T}{p}
\end{align*}
where $\widetilde{\Sigma}$ is the compositional estimate from
Section~\ref{CompositionalCovariance}. So we are trying to solve for $c$ such that
\begin{align*}
\left|\Sigma\right|-cp\left|\widetilde{\Sigma}+\frac{11^T}{p}\right|&=0
\end{align*}
Thus 
\begin{align*}
c=\frac{\left|\Sigma\right|}{p\left|\widetilde{\Sigma}+\frac{11^T}{p}\right|}
\end{align*}

In practice, this method does not perform well, because the small
eigenvalues and eigenvectors are not well estimated. Indeed some of
the estimates are often negative. We therefore improve practical
performance by introducing some thresholding.
Let $\Sigma=VDV^T$ where $$D=\left(\begin{array}{llll}\lambda_1&0 &
  \cdots & 0\\0 &\lambda_2 & \cdots &0 \\ \vdots & \vdots & \ddots
  & \vdots \\ 0 & 0 & \cdots & \lambda_p\end{array}\right)$$
with $\lambda_1>\lambda_2>\cdots>\lambda_p$.
Since $\Sigma$ should be non-negative definite, we first replace any
negative eigenvalues by 0. Next we choose the smallest $j$ such that
$\frac{\lambda_2+\cdots+\lambda_j}{\lambda_2+\cdots+\lambda_p}>0.9$. The
idea here is that we assume that $\lambda_1$ is influenced
mostly by sequencing depth and take eigenvalues accounting for 90\% of
the variance we are trying to estimate. We now replace all smaller
eigenvalues $\lambda_k$ for $k>j$ by $\lambda_{j}$. We now have a new
diagonal matrix
$$\widetilde{D}=\left(\begin{array}{llllll}\lambda_1&0 &  \cdots & 0&
  \cdots & 0\\0 &\lambda_2 & \cdots &0 &  \cdots & 0\\ \vdots & \vdots & \ddots
  & \vdots & & \vdots \\ 0 & 0 & \cdots & \lambda_j & \cdots &
  0 \\ \vdots & \vdots & 
  & \vdots & \ddots & \vdots \\ 0 & 0 & \cdots & 0 & \cdots &
\lambda_j   \end{array}\right)$$
Then we apply the above minimum variance
sequencing depth correction procedure to $V\widetilde{D}V^T$.

\section{Projection onto Principal Component Space}\label{Projection}

Having estimated the principal components of the latent Poisson means,
the remaining problem is to ``project'' the observations onto this
principal component space. The methods in this paper are based on a
semiparametric framework. We are assuming that the latent Poisson
means follow some unknown distribution, and are seeking to use
non-parametric methods to estimate the principal components of this
latent distribution. In this spirit, our projection method should be
based on the same assumptions. We will assume for each observed point,
that there is some latent random vector $\mathbf{\Lambda}$ of Poisson means,
and that we want to project these Poisson means onto the principal
component space. In the non-latent variance of principal components,
we have an observation $X$, and we seek the projection $X^*$ in
principal component space that minimises $(X-X^*)^T(X-X^*)$. In our
case, we have a latent random vector $\mathbf{\Lambda}$, and have calculated
the principal component space of $f(\mathbf{\Lambda})$. If we knew $\mathbf{\Lambda}$,
we would seek the projection $R$ in the principal component space that
minimises
$\left(f(\Lambda)-R\right)^T\Sigma^{-1}\left(f(\Lambda)-R\right)$, where
$\Sigma$ is the covariance matrix of $f(\mathbf{\Lambda})$. (Because the principal
component space is spanned by eigenvectors of $\Sigma$, it turns out
that it is equivalent to take the projection $R$ that minimises
$\left(f(\Lambda)-R\right)^T\left(f(\Lambda)-R\right)$.)

However, we do not observe $\mathbf{\Lambda}$ directly, instead, we observe $\mathbf{X}\sim
Po(\mathbf{\Lambda})$. We therefore simultaneously expect $\mathbf{\Lambda}$ to be such
that $\mathbf{X}$ has high likelihood, and such that $f(\mathbf{\Lambda})$ is close
to the principal component space. We therefore create the objective
function
$$L(\Lambda)=\frac{1}{2}\left(f(\Lambda)-R\right)^T\Sigma^{-1}\left(f(\Lambda)-R\right)+\Lambda^T1-\mathbf{X}^T\log(\Lambda)$$
where $\log(\Lambda)$ is the vector whose $i$th element is
$\log(\Lambda_i)$ --- that is it is the logarithm function applied
elementwise to the vector $\Lambda$. Note that because we have
computed the covariance matrix of $f(\Lambda)$, there is a canonical
choice of how to balance the two constraints, and we do not need a
tuning parameter. If the latent variable $f(\Lambda)$ is normal, then
$L(\Lambda)$ is the negative log-likelihood of the joint distribution of $\mathbf{X}$ and $\mathbf{\Lambda}$
(treating $R$ as parameters).

To solve this, we substitute the linear model $f(\Lambda)=\mu+Va$
where $V$ is the matrix of principal component vectors.  That is $\Sigma=VDV^T$. Suppose that our
projection is onto the first $r$ principal components and let
$$(a_{[r]})_i=\left\{\begin{array}{ll}a_i&\textrm{if }i\leqslant r\\0&\textrm{otherwise}\end{array}\right.$$
That is, $a_{[r]}$ is the vector of the first $r$ elements of
$a$. Then we have $R=\mu+Va_{[r]}$. In this basis, the objective function becomes
$$L(a)=\frac{1}{2}\left(a-a_{[r]}\right)^TD^{-1}\left(a-a_{[r]}\right)+f^{-1}\left(\mu+Va\right)^T1-\mathbf{X}^T\log\left(f^{-1}\left(\mu+Va\right)\right)$$
where $D$ is the diagonal matrix of eigenvalues of $\Sigma$, in the
order corresponding to $V$.  The estimate for
$a$ is therefore given by setting the derivative of this to zero. That
is
$$0=\nabla L(a)=D^{-1}\left(a-a_{[r]}\right)+V^T\left(\nabla
f\left(\Lambda\right)^T\right)^{-1}1-V^T\left(\nabla
f\left(\Lambda\right)^T\right)^{-1}\diag(\Lambda)^{-1} X$$
In the particular case where $f(\Lambda)=\log(\Lambda)$ (evaluated elementwise)
this equation becomes
$$D^{-1}\left(a-a_{[r]}\right)+V^T(\Lambda-X)=0$$ Differentiating
again, the Hessian matrix for the case where $f(\Lambda)=\log(\Lambda)$ is $$H=D^{-1}I_{i\geqslant
  r+1}+V^T\diag(\Lambda)V=V^T(\diag(\Lambda)+S)V$$ where $S=VD^{-1}I_{i\geqslant
  k}V^T$ Newton's method now gives
\begin{align*}
  a_{\textrm{new}}&=a-V^T(\diag(\Lambda)+S)^{-1}V\left(D^{-1}\left(a-a_{[r]}\right)+V^T\Lambda-V^TX\right)\\
  &=a-V^T(\diag(\Lambda)+S)^{-1}\left(VD^{-1}\left(a-a_{[r]}\right)+\Lambda-X\right)\\
\end{align*}

%

\section{Simulation}\label{Simulations}

\subsection{Simulation Design}

We assess the performance of the method under three different
transformations --- the default untransformed Poisson means, a
polynomial transformation $f(x)=0.004x^3-0.4x^2+16x$, and a
logarithmic transformation. In each case, we set the eigenvalues of
the covariance matrix to a fixed sequence $d$, and simulate the
eigenvectors $V$ uniformly.  We now simulate the true transformed
covariance matrix as $M=VDV^T$, where $D$ is a diagonal matrix with
diagonal entries $d$. We simulate an $n\times p$ data matrix $T$ of
transformed Poisson means, where each row of $T$ is i.i.d. with a
chosen distribution (usually multivariate normal) with covariance
matrix $M$. We compute the Poisson means from $T$ by
$\Lambda_{ij}=f^{-1}(T_{ij})$.  Finally, we simulate $X_{ij}\sim
Po(\Lambda_{ij})$. For each scenario, we simulate 100 datasets. We use
the same eigenvalues, but different eigenvectors in the 100 simulated
covariance matrices. We also simulate different means for $T$ for each
simulated dataset, using a normal distribution with mean 100 and
standard deviation 15.

For simulations with sequencing depth correction, in the compositional
case, we modify the above procedure by setting $d_p=0$, and $v_p=1$,
with other eigenvectors in $V$ simulated orthogonally to 1. In the
non-compositional case, simulate $\Lambda$ as above. In both cases, we
then simulate sequencing depth noise following a gamma distribution
with shape parameter 4 and scale parameter 100.

\subsection{Simulation Results}

\subsubsection{Assessing Simulation Results}

We want to devise a method to assess the performance of our corrected
PCA on the simulated data. In this case, we have a true covariance
matrix $\Sigma=VDV^T$, where $V$ is orthogonal and $D$ is
diagonal. The principal component matrix $V$ is defined by the
property of simultaneously maximising $\sum_{i=1}^k (V^T\Sigma
V)_{ii}$ for all $k$, subject to being orthogonal. We have obtained an
estimated covariance matrix
$\widehat{\Sigma}=\widehat{V}\widehat{D}\widehat{V}^T$ with
$\widehat{V}$ orthogonal and $\widehat{D}$ diagonal. We want to assess
the performance of $\widehat{V}$ as an estimate for $V$. Since our
objective is to maximise the variance explained by the first $k$
principal components, we will measure this by comparing cumulative
true variance explained by the first $k$ estimated principal
components, i.e. the cumulative sums
$\sum_{i=1}^k (\widehat{V}^T\Sigma\widehat{V})_{ii}$, with the
true cumulative sum  $\sum_{i=1}^k(V^T\Sigma V)_{ii}$.

\subsubsection{Simulation results for untransformed PCA without sequencing depth correction}

The simulation results for normally distributed Poisson means without
sequencing depth noise with data dimensions equal to 80 are shown in
Figure~\ref{linearsimresults}. We compare the percentages of the total
variance explained by the true eigenvectors, the eigenvectors found by
our Poisson PCA method, and the eigenvectors found by ``naive PCA''
i.e. applying PCA directly to the observed data, without any
correction for Poisson error. We see that when $n=100$, our method is
comparable with naive PCA, but as sample size increases, our method
performs much better, as expected. For large $n$, our method appears
to be converging to the correct eigenvalues as $n\rightarrow\infty$,
while uncorrected PCA is not improving with larger sample sizes,
because of the bias. Similar results when $p=30$ are shown in
Supplementary Figure~\ref{linearsimresultsp30}.

\begin{figure}[htbp]
\caption{Comparison of methods for untransformed PCA with Poisson
  noise, without sequencing depth noise}\label{linearsimresults}
Poisson PCA (blue) is compared with standard PCA (red) over a range of
sample sizes. The truth is shown in black. Results are averaged over
100 simulations. Dimension of the data is $p=80$.

\begin{subfigure}{0.48\textwidth}
  \includegraphics[width=7.5cm]{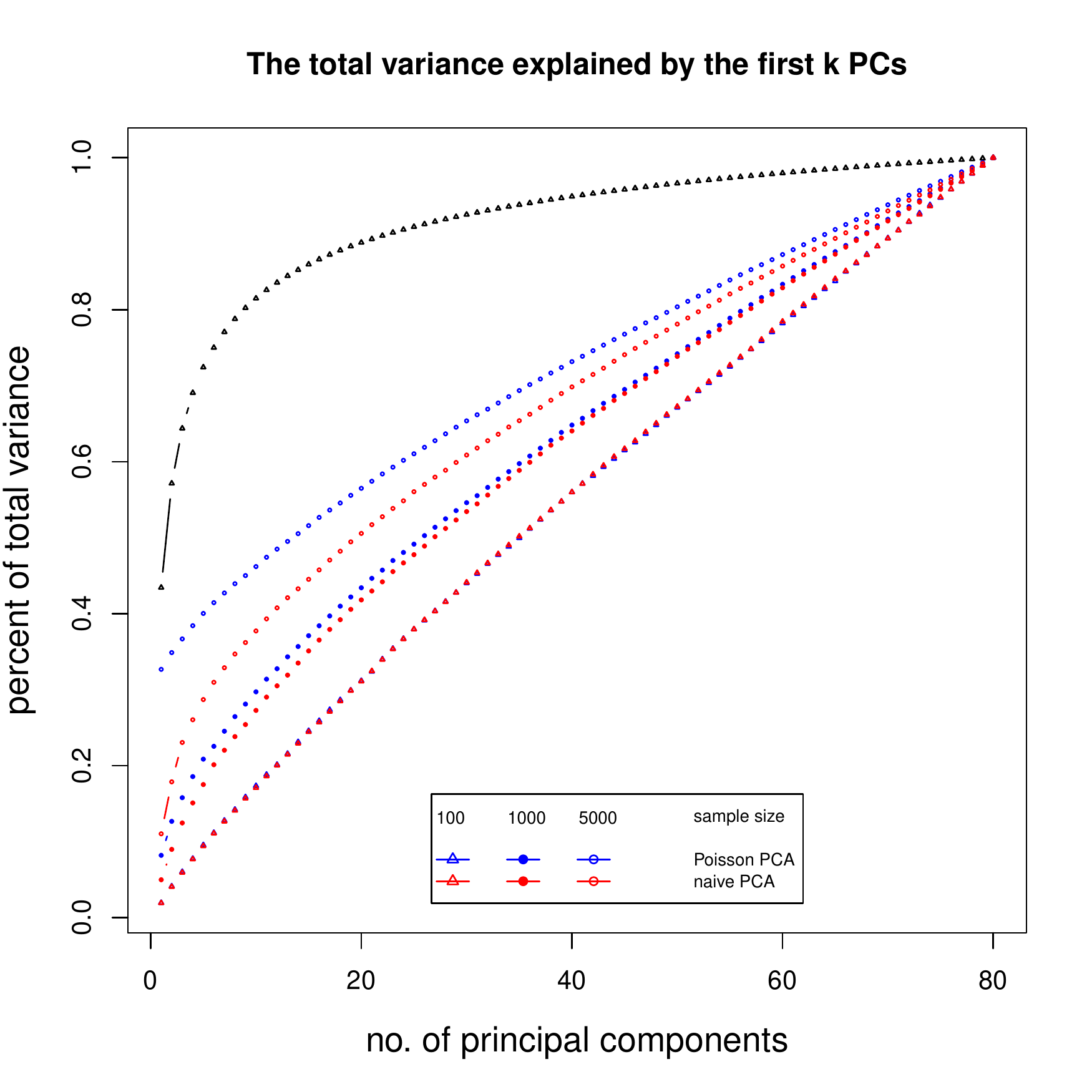}
  \caption{Sample sizes between 100 to 5000}
\end{subfigure}
\begin{subfigure}{0.48\textwidth}
  \includegraphics[width=7.5cm]{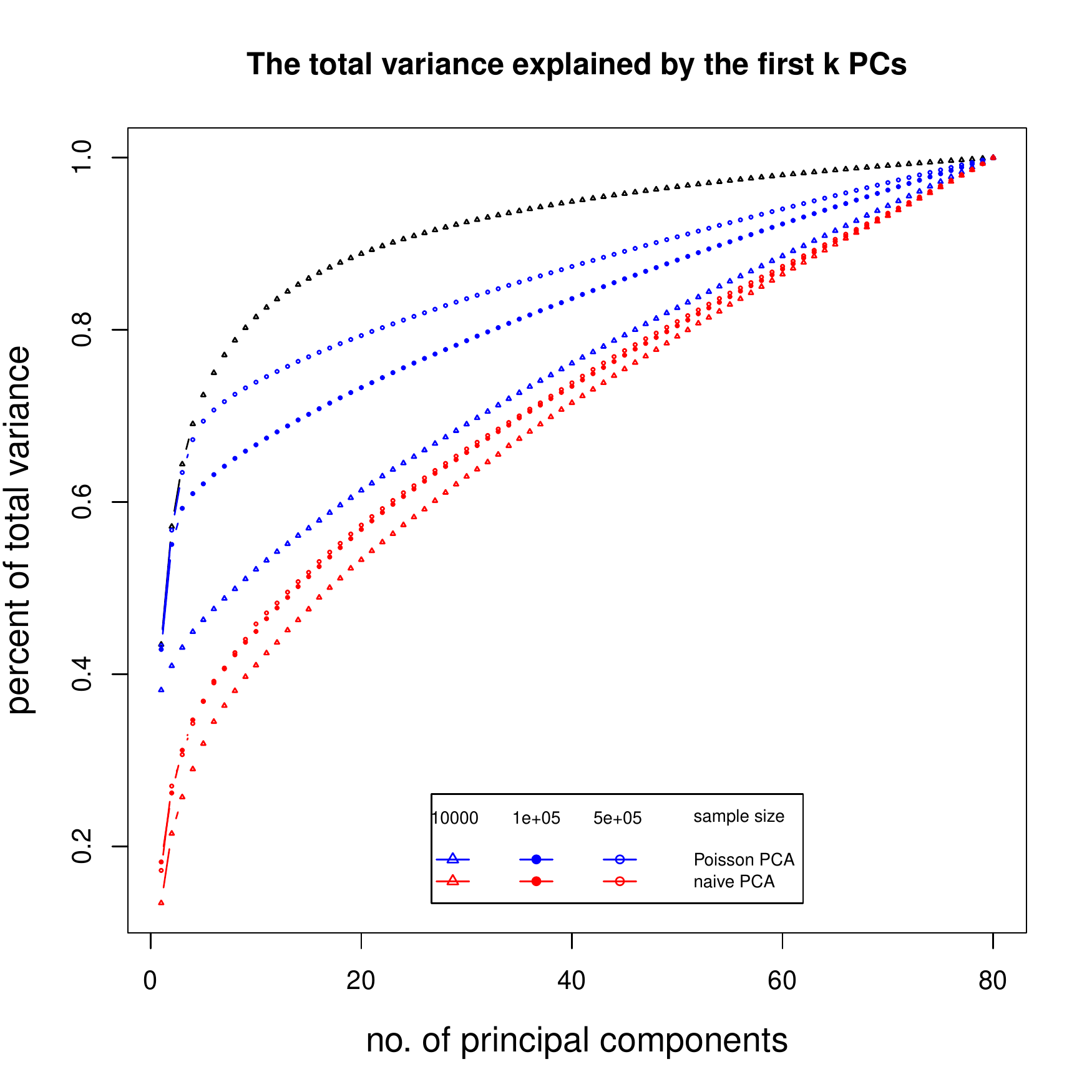}
  \caption{Sample sizes between $10^4$ to $5 \times 10^5$}
\end{subfigure}
\end{figure}

\subsubsection{Simulation results for Polynomial Transformed PCA without sequencing depth correction}

Figure~\ref{polynomialsimresultslarge} shows the simulation results
for polynomial-transformed Poisson means. Again, our method is
comparable to other methods for small sample sizes, and converges
towards the truth as sample size increases. In this case, the larger
variance of the estimators means that covergence of our method is
slower than in the untransformed case. However, Poisson PCA begins to
outperform other methods for sample size around 5,000, which is a
plausible sample size for some real data sets.  Supplemental
Figure~\ref{polynomialsimresultsp30} shows similar results for $p=30$.

\begin{figure}[htbp]

\caption{Comparison of methods for estimating polynomial transformed
  PCA with Poisson noise, without sequencing depth noise}  \label{polynomialsimresultslarge}
We compare Poisson PCA (blue) with standard PCA (red) and PCA on the
results of applying the polynomial transformation directly to the
observed data without any Poisson error correction (green).
The truth is shown in black. Results are averaged over 100
simulations. Dimension of the data is $p=80$.
  
\begin{subfigure}{0.48\textwidth}
  \includegraphics[width=7.5cm]{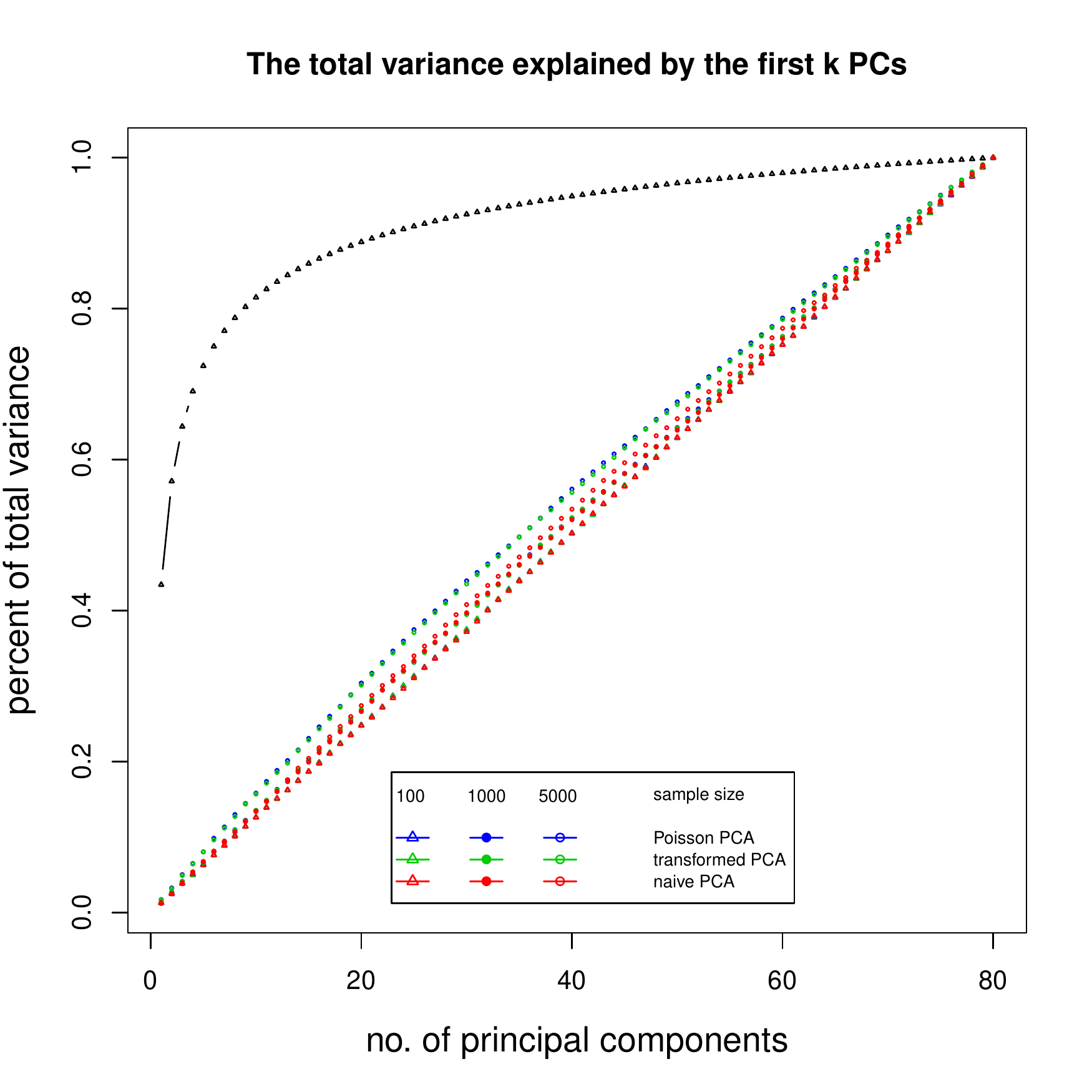}
  \caption{Sample sizes between 100 and 5000}
\end{subfigure}
\begin{subfigure}{0.48\textwidth}
  \includegraphics[width=7.5cm]{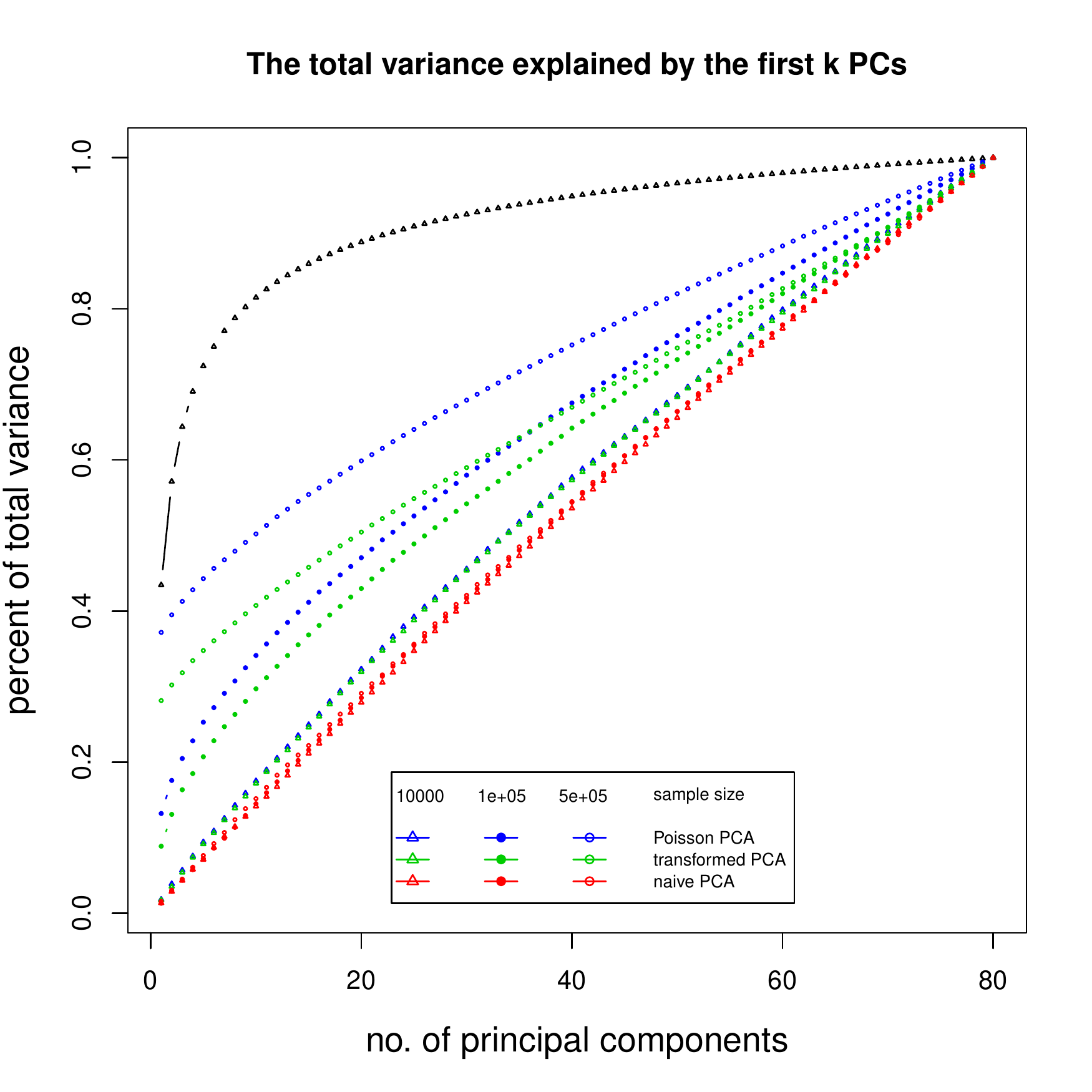}
  \caption{Sample sizes between $10^4$ and $5 \times 10^5$}
\end{subfigure}
  
\end{figure}

\subsubsection{Simulation results for Log Transformed PCA without sequencing depth correction}\label{SimResultsLogTransformed}

Figure~\ref{logsimresults} compares Poisson PCA to PLNPCA (Chiquet
{\em et. al}, 2018), which is a parametric method based on a Poisson
log-normal assumption using a variational approximation to estimate
the parameters by MLE, and to PCA directly performed on the
log-transformed data (with 0 counts arbitrarily replaced by 0.1). We
compare the methods on both log-normal and non log-normal Poisson
means. For the non log-normal case, we simulated some data in polar
coordinates with direction uniformly distributed over the sphere and
radius distributed following a centred gamma distribution with shape
$\alpha=5$ and scale $\theta=\frac{1}{\sqrt{5}}$. The data were then
stretched to have the proscribed variance. Similar results with $p=30$
are shown in Supplemental Figure~\ref{logsimresultsp30}.

When sample size is small, Poisson PCA clearly outperforms both PLNPCA
methods and the naive PCA on log-tranformed data in the first few
dimensions which capture about 70\% to 80\% of the total variance.
After this point, the total variance explained by naive PCA on
log-tranformed data starts to get higher. Between the two PLNPCA
methods, the full rank log-Normal Poisson PCA seems to outperform the
log-normal Poisson model with rank chosen by BIC. When sample size is
large ($n=1,000$) the Poisson PCA and two PLNPCA methods show
comparable results and perform better than the naive PCA on
log-tranformed data.  In practice, when we use PCA, we aim in
particular to identify the first few principal components, so better
performance at identifying the earlier prinicipal components is a very
strong asset of our method. All the methods are robust to changing the
underlying distribution of the Poisson means. Results for $n=5,000$
are shown in Supplemental Figure~\ref{logsimresultsn5000}; we do not
give results for larger $n$ because of the computation time needed by
PLNPCA.

\begin{figure}[htbp]

\caption{Comparison of methods for estimating log-transformed PCA
  without sequencing depth noise}\label{logsimresults} We compare
Poisson PCA with two Parametric Log-normal Poisson Methods (pink and
green) and naive PCA on log-transformed data (red) with 0 values
replaced by 0.1. The truth is shown in black. Results are averaged
over 100 simulations. $p=80$. The log-transformed Poisson means follow
a multivariate normal distribution in the left plot and a spherically
symmetric gamma distribution in the right.
  
\begin{subfigure}{0.48\textwidth}
  \includegraphics[width=7.5cm]{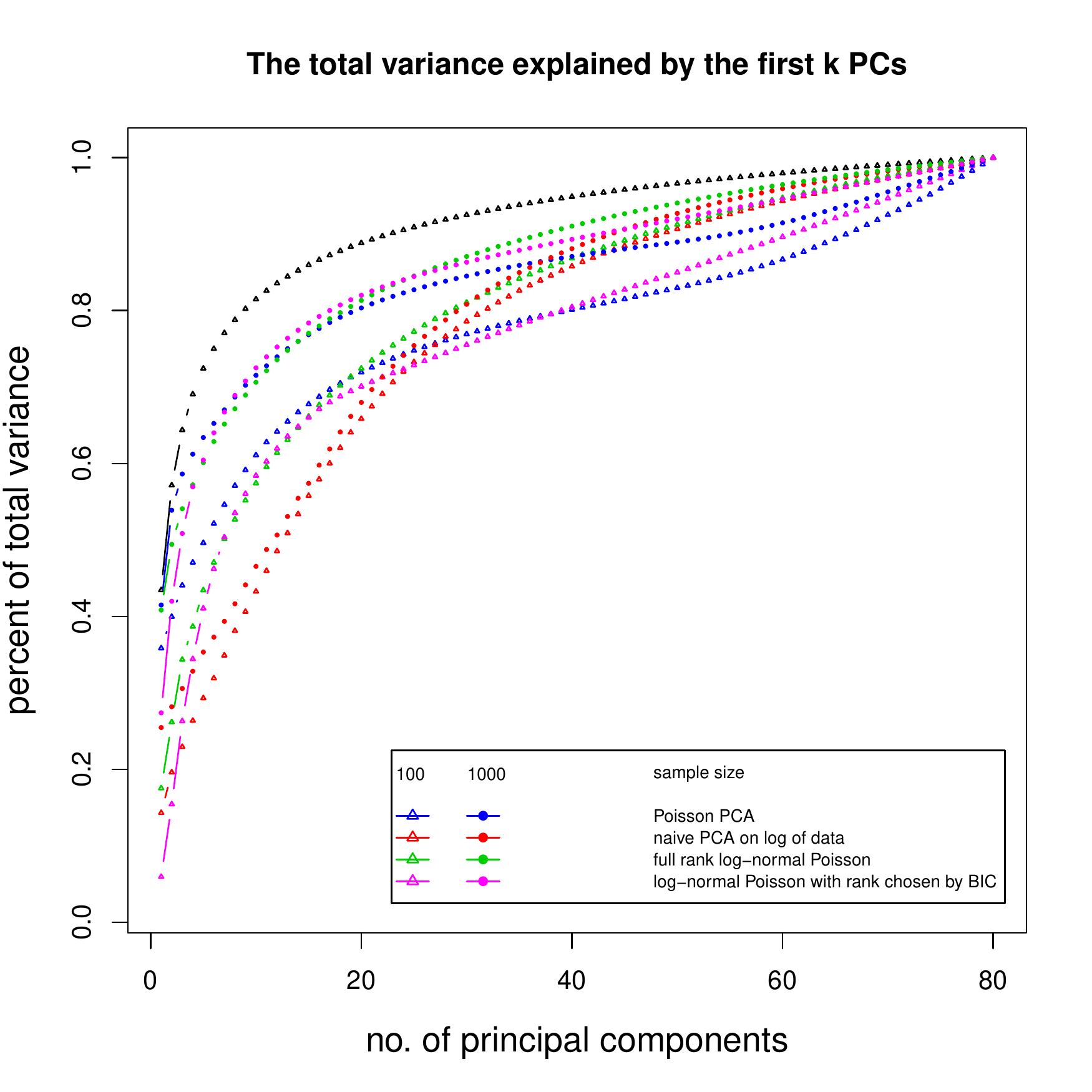}
  \caption{Normal distribution}
\end{subfigure}
\begin{subfigure}{0.48\textwidth}
  \includegraphics[width=7.5cm]{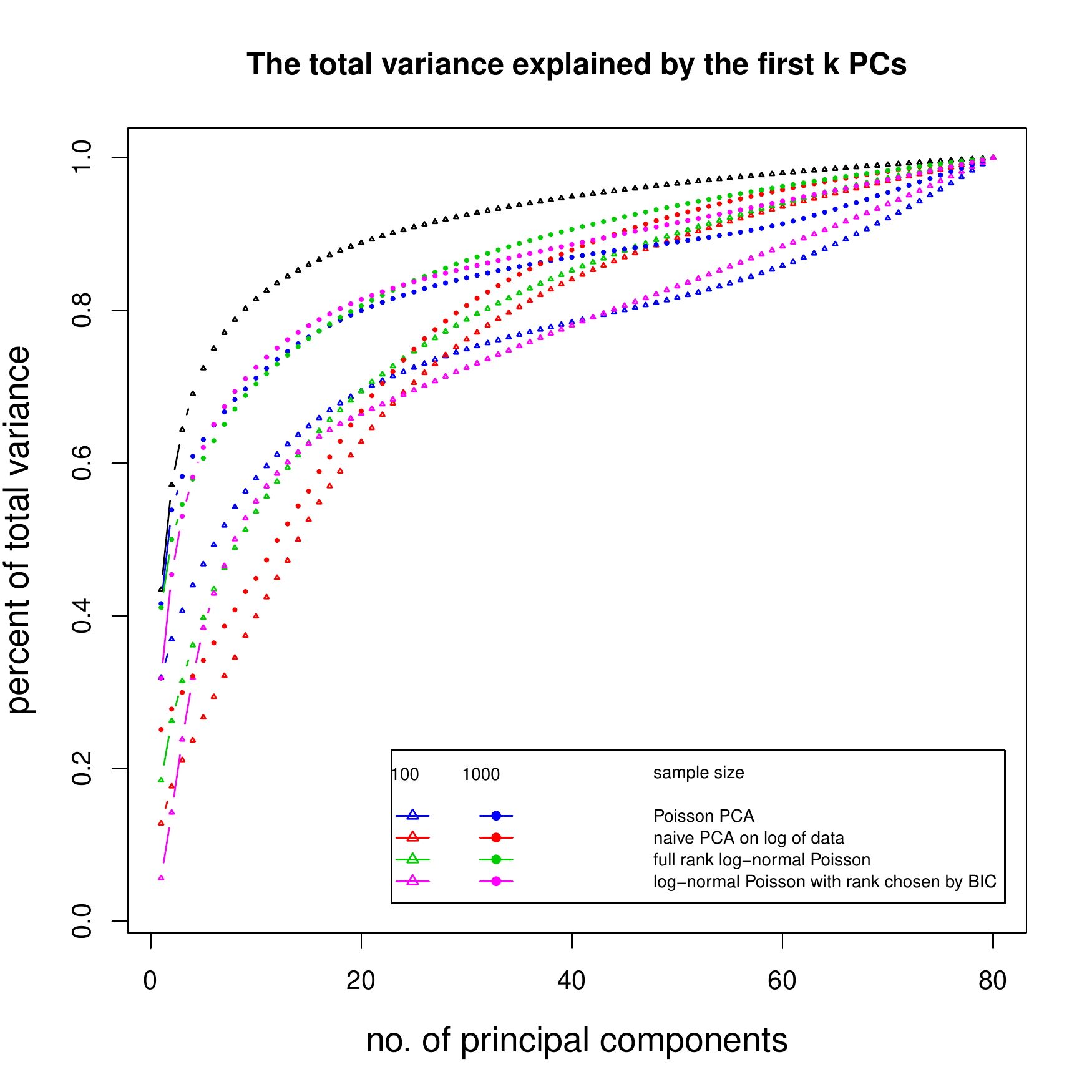}
  \caption{Gamma distribution}
\end{subfigure}
  
\end{figure}

In addition to the better identification of the early principal
components, Poisson PCA also has a major advantage in computation
time. Figure~\ref{loglogtimes} compares the logarithm of the average
processor time (amount of computation performed) over 100 simulated
datasets, for the three methods. For the limited rank PLNPCA, with
rank chosen by BIC, we compared all possible ranks. Limiting the rank
would improve the runtime of this method, to make it more comparable
to the direct PLNPCA estimate. Poisson PCA is faster than the
likelihood-based method by an order of magnitude. The runtimes are
approximately linear in $log(p)$, so we have shown regression
lines. Our Poisson PCA method is not quite linear, probably because
the singular value decomposition, performed after variance estimation,
takes nonlinear time. Because Poisson PCA is much faster than the
other methods, this nonlinearity is more apparent for our method.

\begin{figure}[htbp]

\caption{Comparison of Runtimes}\label{loglogtimes}
  
\includegraphics[width=13cm]{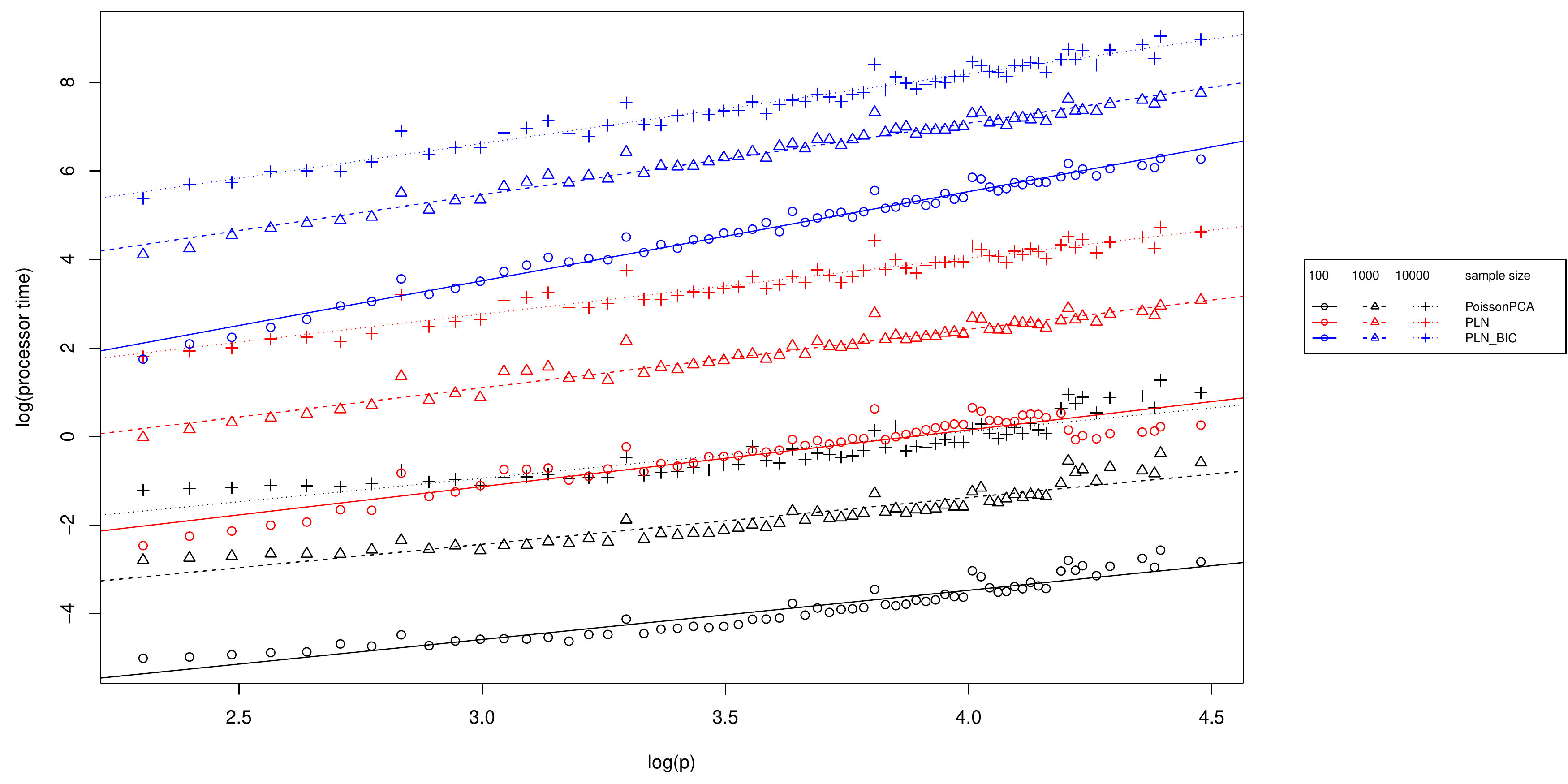}
  
\end{figure}

\subsubsection{Simulation results for untransformed PCA and log-transformed PCA with sequencing depth correction}

Figure~\ref{linearsimresultsseqdepth} compares Poisson PCA with
compositional PCA (i.e. data are rescaled so each row has total 1, and
PCA is applied to the rescaled data). The scaled Poisson means are
compositional, and follow a multivariate normal distribution; but we
have multiplied each Poisson mean by a random sequencing depth. Our
semiparametric method has done better than compositional PCA at
identifying the principal components.

\begin{figure}[htbp]

\caption{Comparison of PCA methods with sequencing depth noise and
  Poisson noise}\label{linearsimresultsseqdepth}
Poisson PCA (blue) is compared with standard PCA (red) over a range of
sample sizes. The truth is shown in black. Results are averaged over
100 simulations. Dimension of the data is $p=80$.

\begin{subfigure}{0.48\textwidth}
  \includegraphics[width=7.5cm]{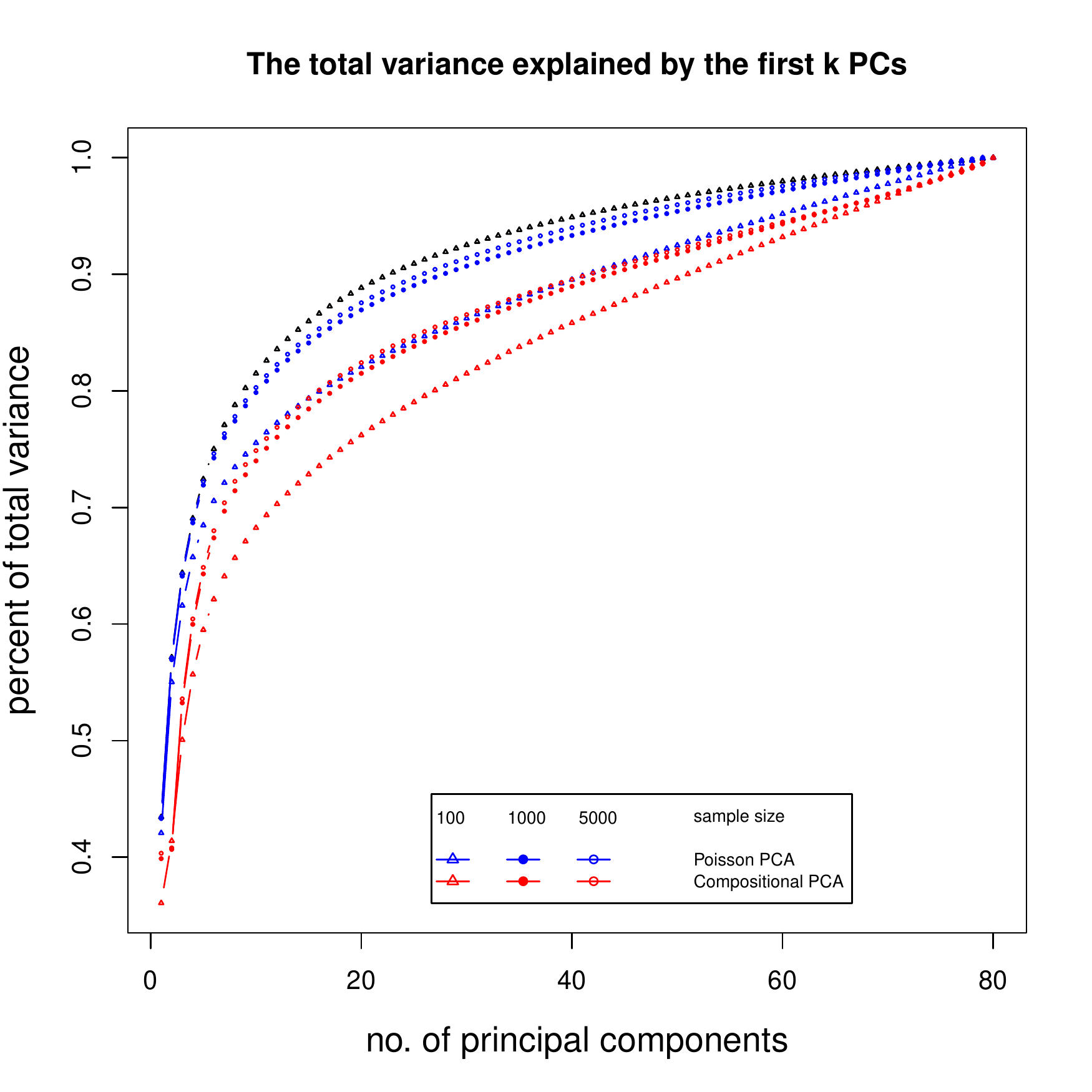}
  \caption{$n$ between 100 and 5,000}
\end{subfigure}
\begin{subfigure}{0.48\textwidth}
  \includegraphics[width=7.5cm]{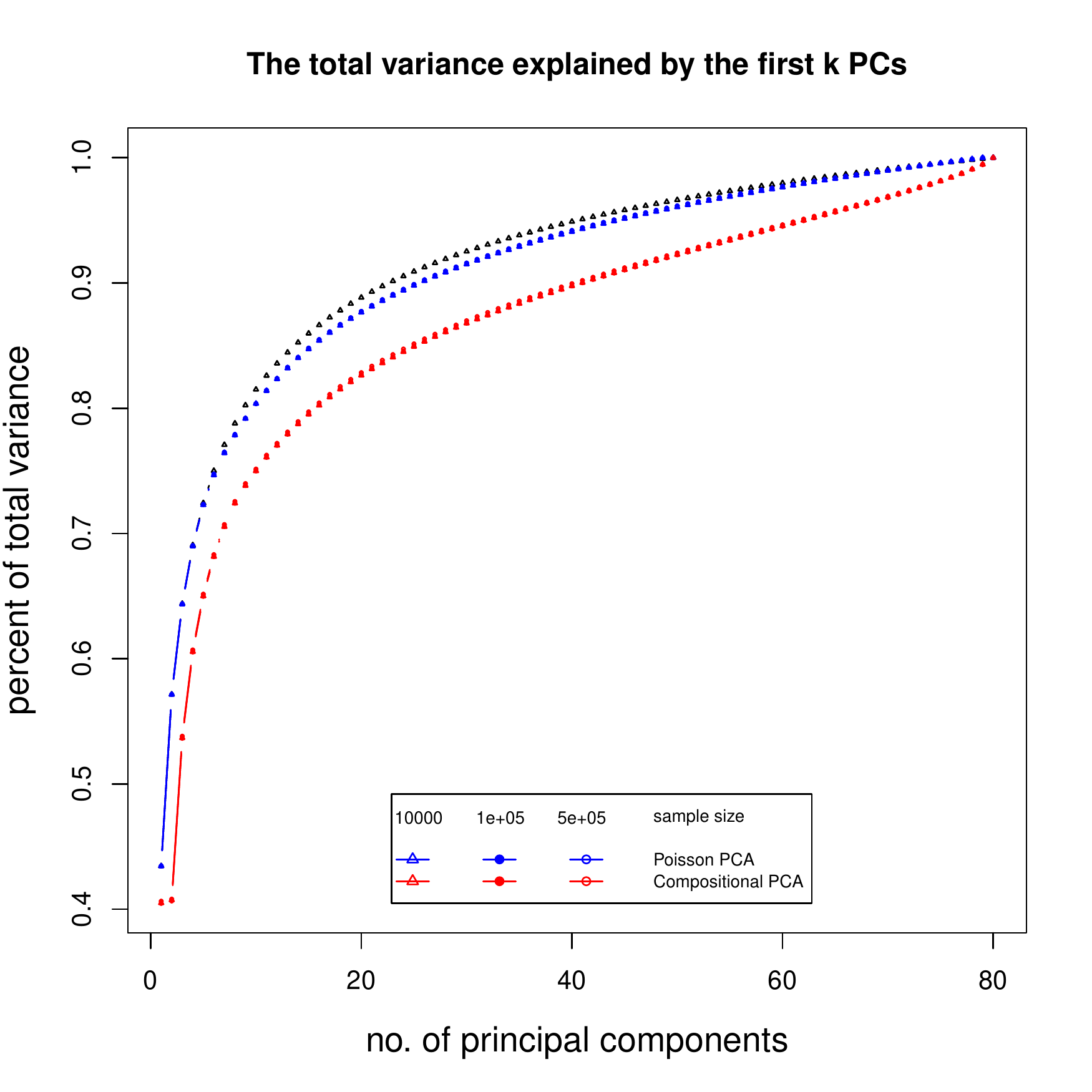}
  \caption{$n$ between $10^4$ and $5\times 10^5$}
\end{subfigure}
  
\end{figure}

Figure~\ref{logsimresultsseqdepth} shows the results for log-normal
simulated data with sequencing depth noise. In the compositional log
Poisson means simulation, we use the log of observed total counts as
an offset in the parametric PLN model. We see that this is a
challenging problem, but that our compositional sequencing depth
correction method (from Section~\ref{CompositionalCovariance}) offers
a slight improvement over the PLNPCA method. We do not include larger
sample sizes because of the computation time needed to fit the PLNPCA
model.  We also compare our two methods for correcting sequencing
depth noise when the latent covariance matrix is not compositional. We
see that the minimum variance approach (from
Section~\ref{MinvarCorrection}) has not done well at identifying the
first eigenvector, because it has not removed enough sequencing depth
noise. However, retaining the signal in the direction 1 allows it to
get improved results for later principal components. Overall it seems
that in most circumstances it would be better to use the compositional
method, but in rare cases where we are aiming to find a large number
of principal components, it may be preferable to use the minimum
variance approach. Additional simulation results are shown in
Supplemental Figure~\ref{logsimresultsseqdepthsupplemental}.

\begin{figure}[htbp]

  \caption{Comparison of log-normal PCA methods with sequencing depth
    noise and Poisson noise}\label{logsimresultsseqdepth} Left:
  compositional log Poisson means, Poisson PCA
  (blue) with compositional sequencing depth correction, compared with
  full rank PLN (green) with total count used as offsets.
  Right: non-compositional log Poisson means. We compare minimum variance
  estimation (blue), compositional sequencing depth
 correction (green) and no sequencing depth correction (red).
  The truth is shown in black on both figures. Results are averaged
  over 100 simulations. $p=80$.

\begin{subfigure}{0.48\textwidth}
  \includegraphics[width=7.5cm]{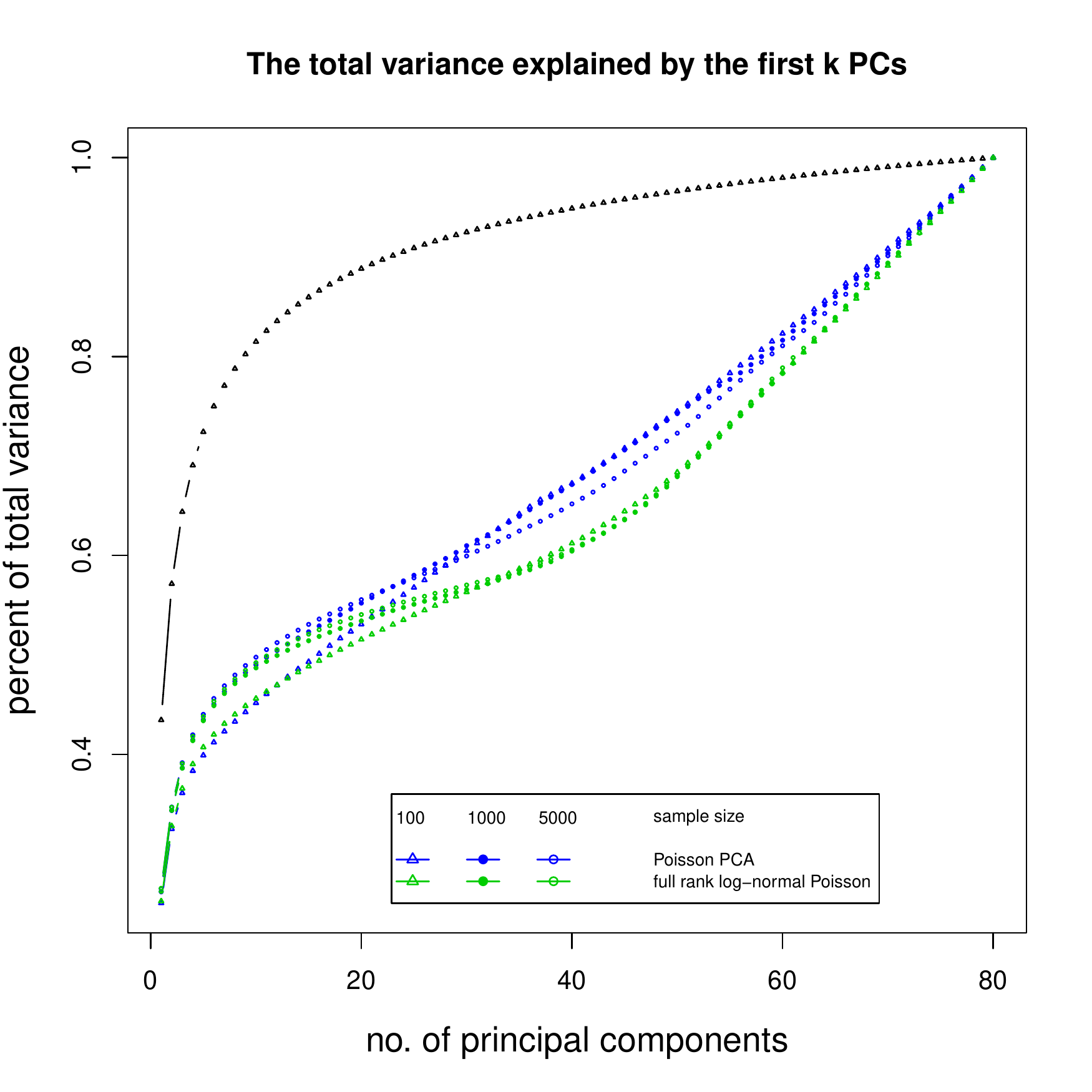}
  \caption{Compositional}
\end{subfigure}
\begin{subfigure}{0.48\textwidth}
  \includegraphics[width=7.5cm]{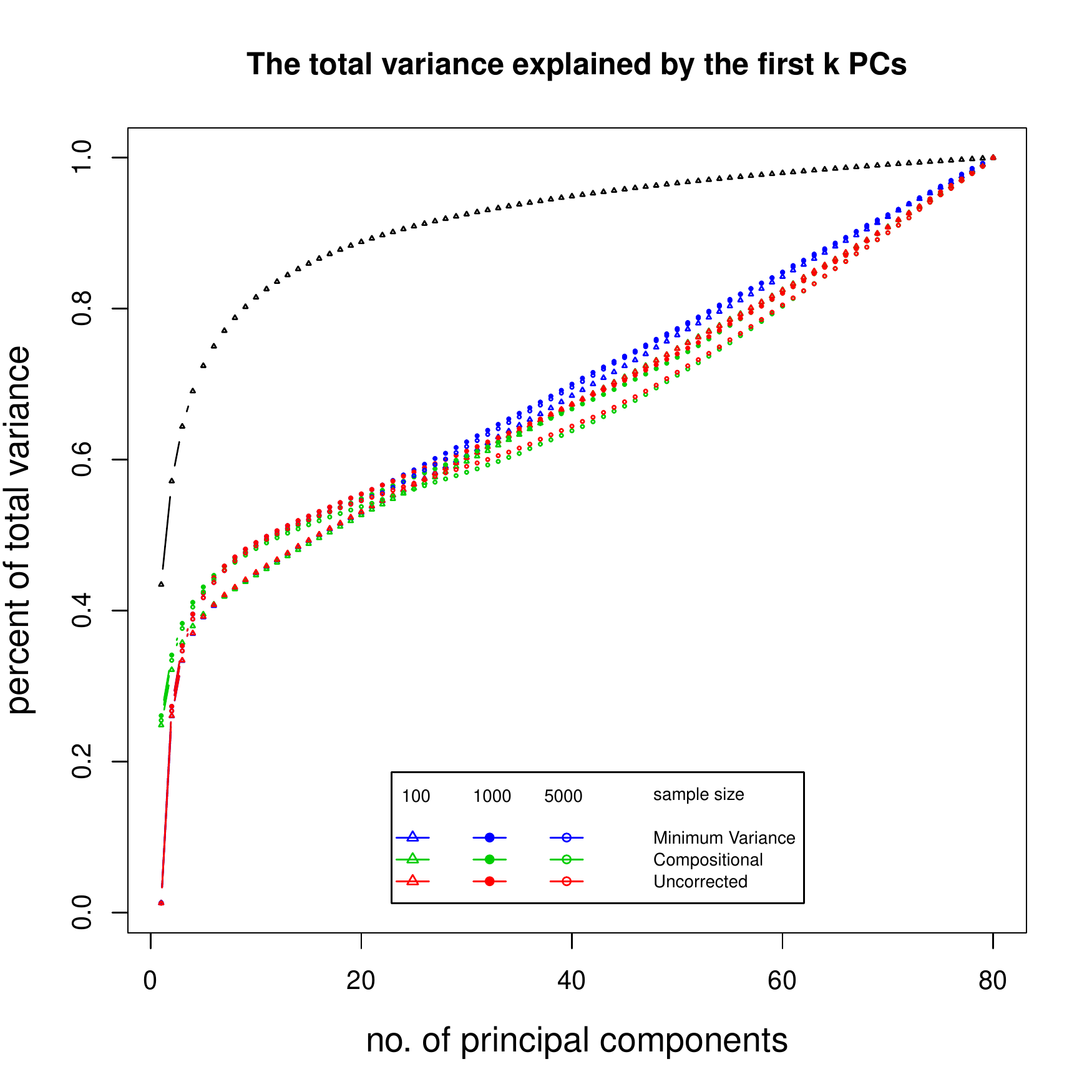}
  \caption{Non-compositional}
\end{subfigure}
  
\end{figure}

\subsection{Projection onto Principal Component Space}

We now test the performance of the projection method from
Section~\ref{Projection} on simulated data. We use our standard
log-normal Poisson simulation without sequencing depth correction. We
compare two projection approaches: (a) pure normal error --- we take
the log of the observed data and project this onto the principal
component space (substituting $-3$ for $\log(0)$); (b) the combined
likelihood method from Section~\ref{Projection}. We apply each of
these projections both using the true covariance matrix, and using the
estimated covariance matrix from Poisson PCA. 

For assessing the quality of our projection, we have the true
transformed Poisson mean $\log(\Lambda)$, so we can project this
Poisson mean onto the principal component space, and measure the
distance between the correct projection and the estimated
projection. Since the projection is orthogonal, Euclidean distance
suffices for this measure.

Figure~\ref{ProjectionSimulationResults} compares the results on both
the true principal component space, and the estimated principal
component space. We see that when we project onto the true principal
component space, our projection method outperforms the na\"ive
projection method for all numbers of principal components. When we
project onto the estimated principal component space, we see that our
method performs excellently for any projection that might ever be used
in practice. However, when we try to project onto high dimensional
component spaces, our method starts to perform less well. In this
case, because there is very little signal in the remaining dimensions,
our method may soak up some of the noise. Optimisation is also more
challenging in higher dimensions, so inaccuracies in the optimisation
may cause deterioration in performance. Since
the typical use of PCA is to obtain a low-dimensional representation
of the data, performance on small numbers of principal components is
most important. Therefore, the results of our method are
good. Similar results for $p=30$ are in Supplementary Figure~\ref{ProjectionSimulationResultsp30}.

\begin{figure}[htbp]

\caption{Comparison of projection methods for
  log-transformed Poisson PCA}\label{ProjectionSimulationResults}

Figures on the top row show MSE of projection onto the true principal
component space (compared to projection of true transformed latent
Poisson means), while figures on the bottom row show MSE of projection
onto the estimated principal component space (compared to projection
of true transformed latent Poisson means). The black line is our
projection method, while the red line is for the na\"ive transform then
project method. Dimension is $p=80$.

\begin{subfigure}{0.32\textwidth}
  \includegraphics[width=5cm]{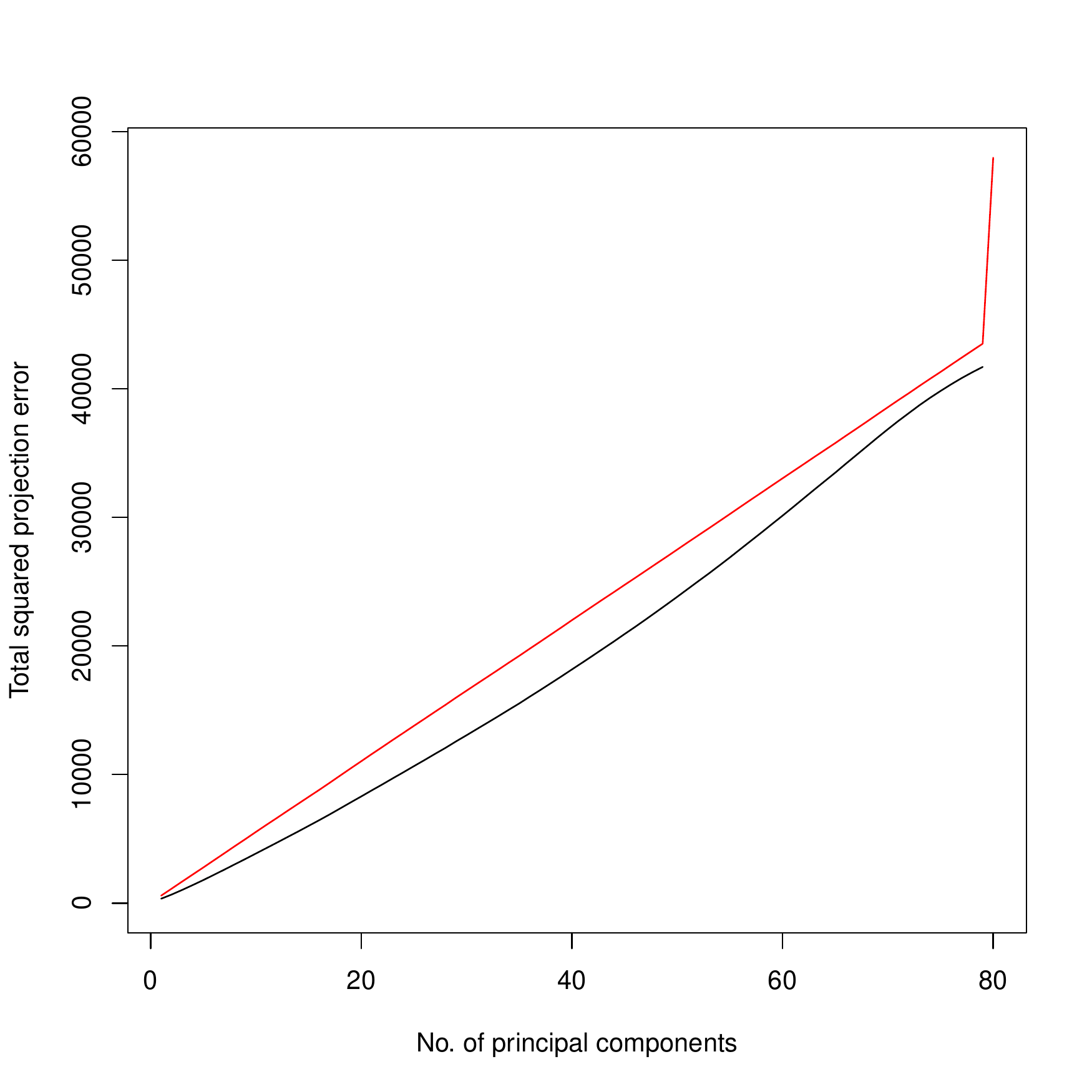}
  \caption{$n=100$, true}
\end{subfigure}
\begin{subfigure}{0.32\textwidth}
  \includegraphics[width=5cm]{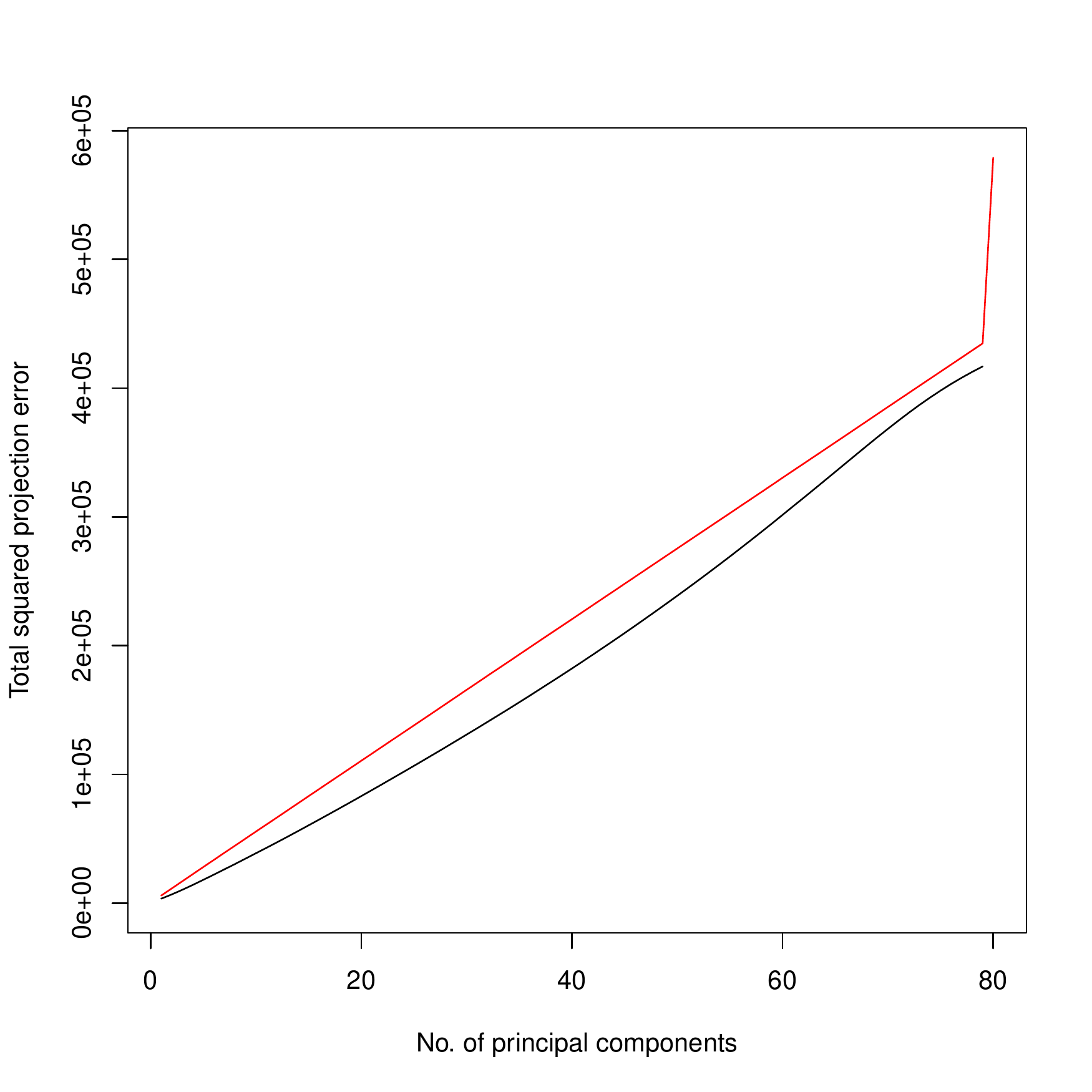}
  \caption{$n=1,000$, true}
\end{subfigure}
\begin{subfigure}{0.32\textwidth}
  \includegraphics[width=5cm]{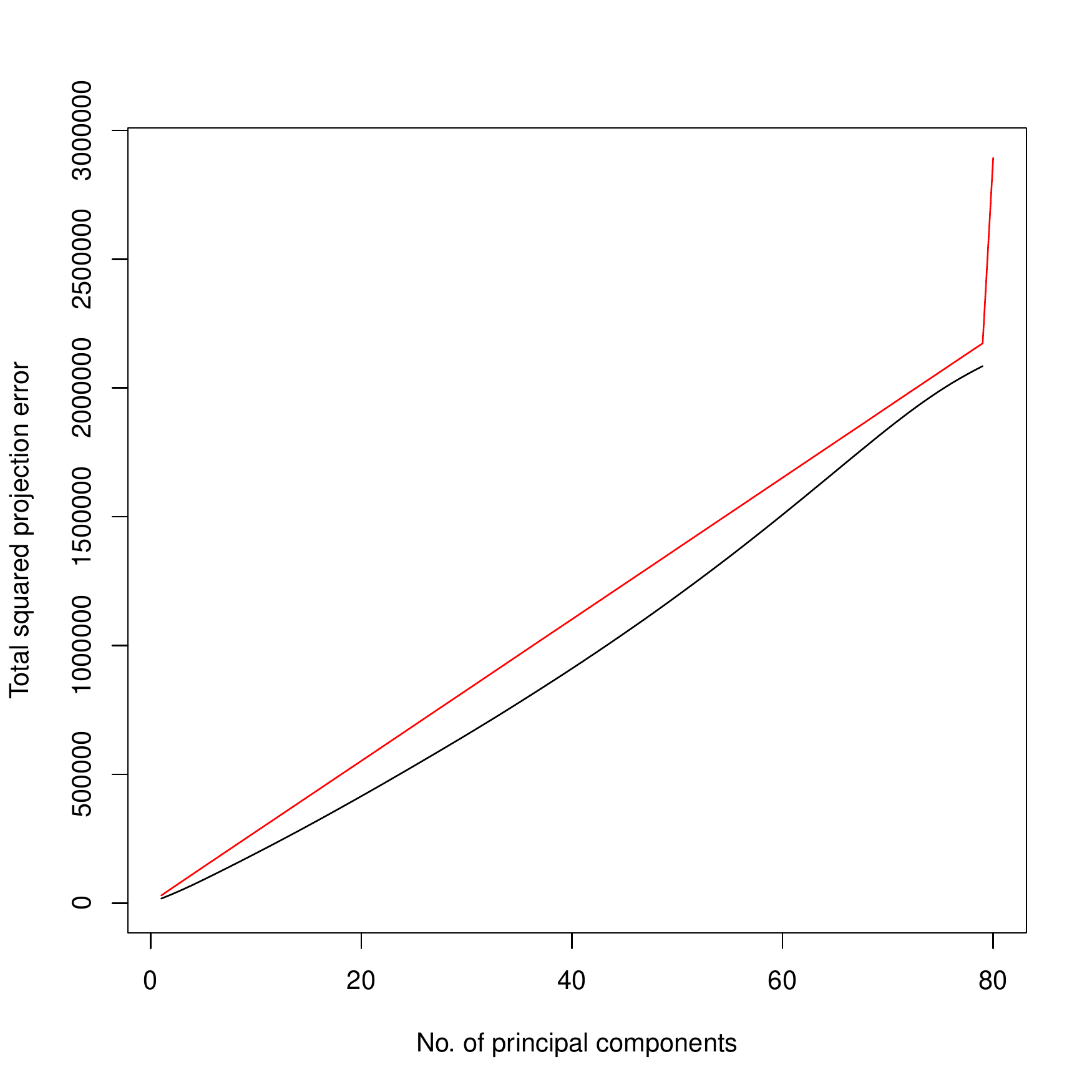}
  \caption{$n=5,000$, true}
\end{subfigure}

\begin{subfigure}{0.32\textwidth}
  \includegraphics[width=5cm]{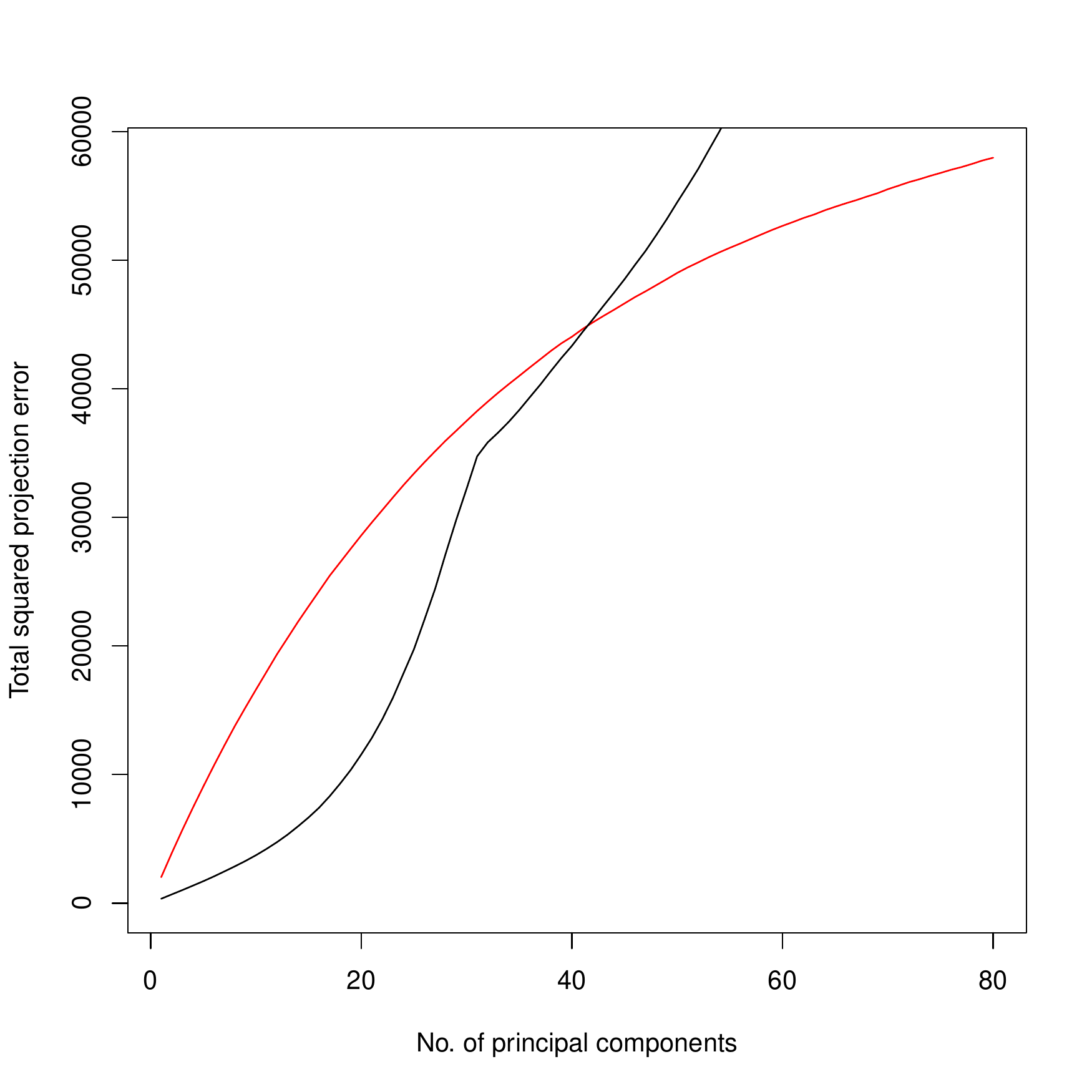}
  \caption{$n=100$, estimate}
\end{subfigure}
\begin{subfigure}{0.32\textwidth}
  \includegraphics[width=5cm]{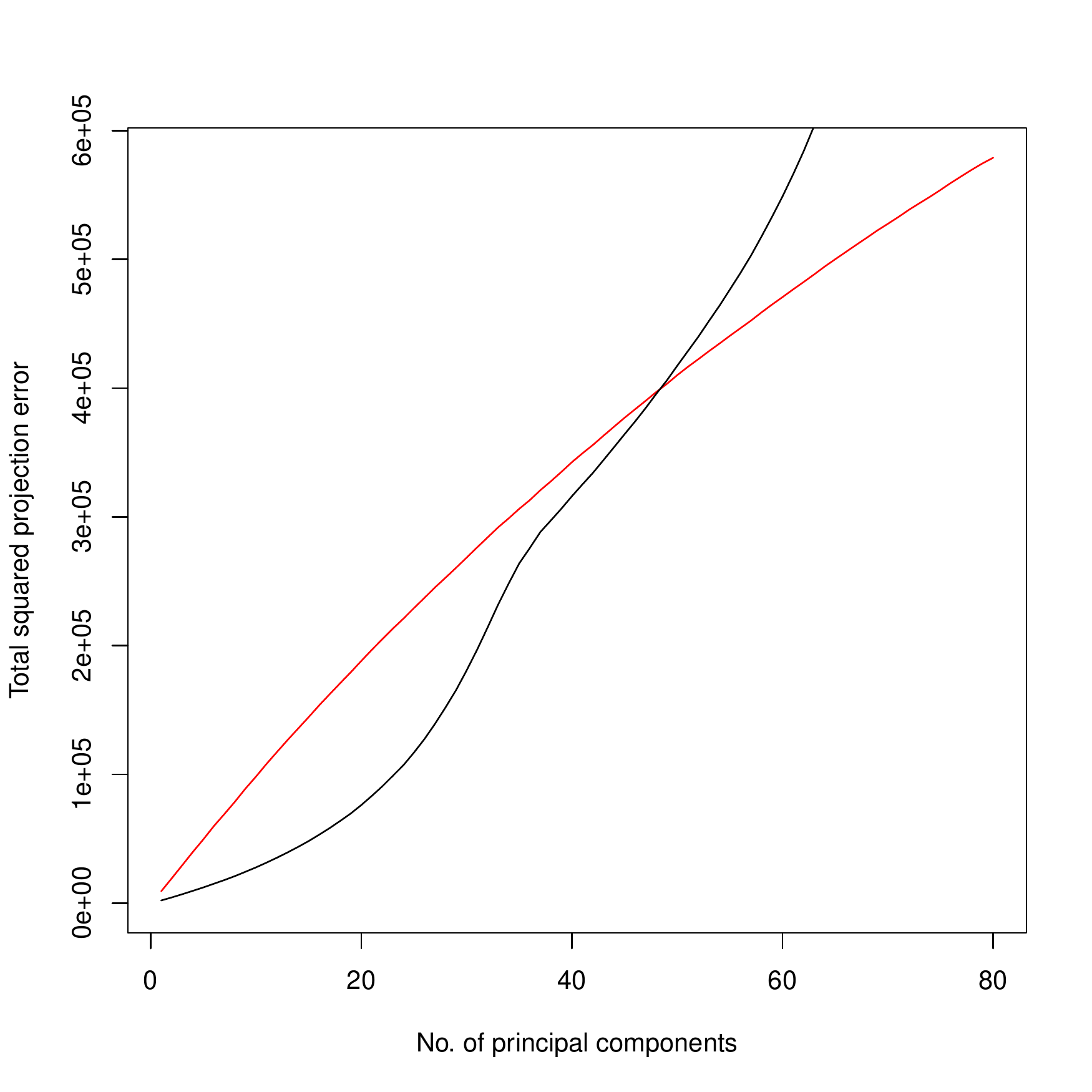}
  \caption{$n=1,000$, estimate}
\end{subfigure}
\begin{subfigure}{0.32\textwidth}
  \includegraphics[width=5cm]{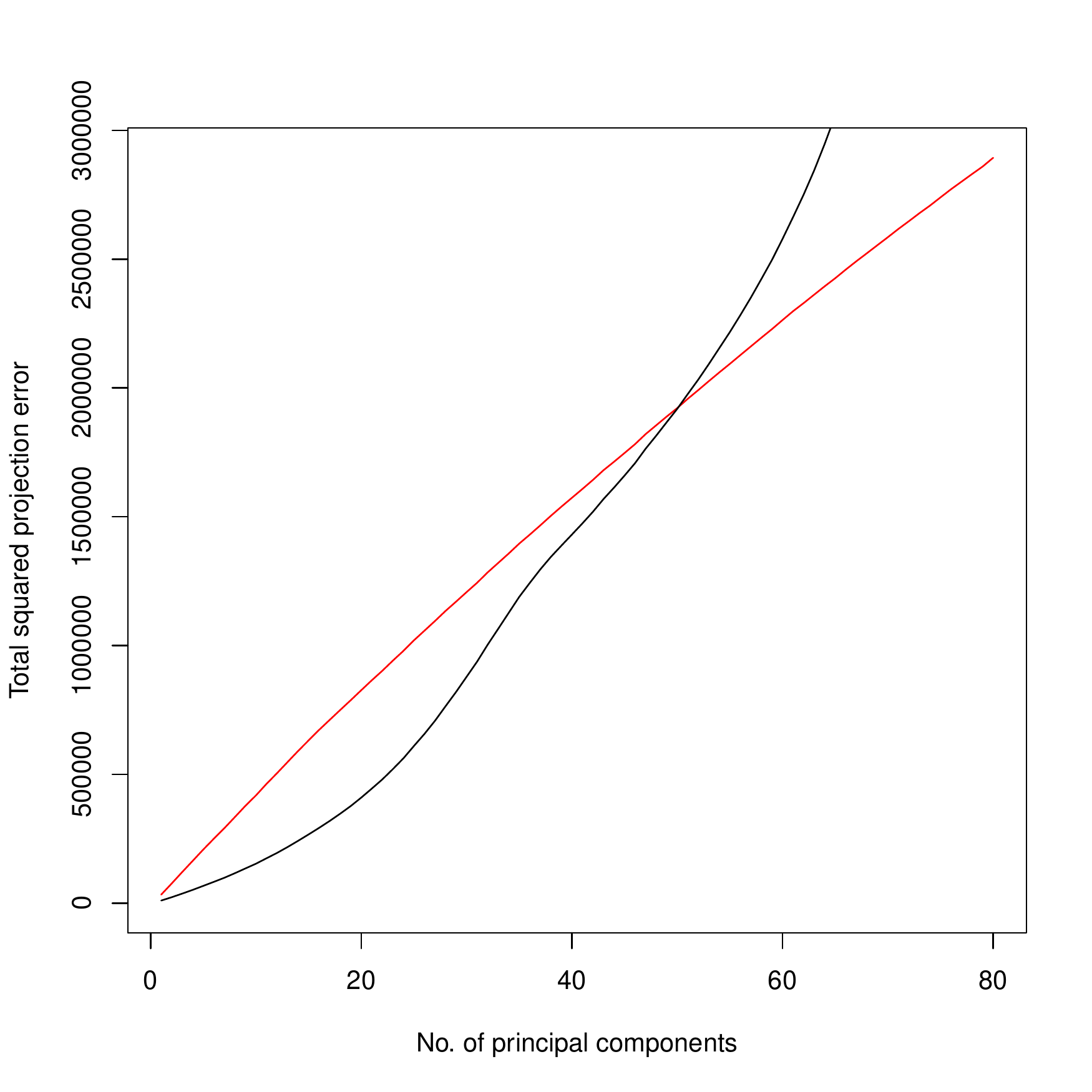}
  \caption{$n=5,000$, estimate}
\end{subfigure}

\end{figure}

\section{Real Data Analysis}\label{RealData}

\subsection{The Moving Picture Dataset}

The moving picture dataset (Caporaso {\em et al.} 2011) is a large
temporal microbiome dataset comprising samples from 4 body sites of
two individuals over a period of approximately two years. There are
over 300 samples from each site for individual~2 and over 100 samples
from each site for individual~1. Previous analysis of this dataset has
shown that it is relatively easy to distinguish the two individuals
from the species-level OTUs, but using higher-level OTUs to
distinguish them is more difficult. We apply our Poisson PCA method
across a range of levels from phylum to genus from this
dataset. Besides making the problem challenging compared to the
results of previous studies, the focus on higher level OTUs also
ensures that the number of variables is less than the number of
observations, which is a necessary condition for the consistency of
classical PCA. We present the results for the first two principal
components at the order level (225 variables) in this section. Results
at other levels are in the supplemenary materials.

There is substantial variation in sequencing depth between samples,
with number of total reads varying between less than 10,000 and over
60,000. The distributions of sequencing depths were different for the
two individuals. Since sequencing depth is largely considered noise
due to experimental procedure, a method which separates samples based
on sequencing depth is not a good method. To prevent methods that fail
to correctly compensate for sequencing depth from wrongly seeming
good, we used subsampling to ensure that the sequencing depths had
the same distribution for each individual.  (Details of the
subsampling procedure are in Supplemental
Appendix~\ref{RealDataSubsamplingProcedure}.) We retain a large
variability of sequencing depth within each individual's samples, but
remove the ability to use sequencing depth as a distinguishing
feature, so that methods that better correct for sequencing depth
noise are expected to perform better on the subsampled data.

\subsection{Data Analysis}

We compare three methods for dimension reduction of the
log-transformed data: our compositional sequencing depth corrected
Poisson PCA method using the projection method from
Section~\ref{Projection}; PLNPCA with rank chosen by BIC in the range
1--30, using total count as an offset; and log compositional PCA,
i.e. naive PCA applied to the log of compositional data with zero
counts replaced by 0.01 before converting to compositional data
(similar results are obtained using other values to replace 0). Since
the differences between the two individuals' microbial communities are
likely to be a large source of variability in the data, we expect to
see a separation between samples from the two individuals after
performing PCA on the data from each body site.

Results for the gut data are in Figure~\ref{MovPicGutOrder}. We see
that all three methods show a reasonable separation between the two
individuals. PLNPCA shows several outliers, which can be explained by
the difficulty of attempting to fit data to a prescribed parametric
distribution if the distribution does not fit.

\begin{figure}[htbp]

\caption{First two principal components: Gut data grouped at order level}\label{MovPicGutOrder}

\begin{subfigure}{0.32\textwidth}
  \includegraphics[width=5cm]{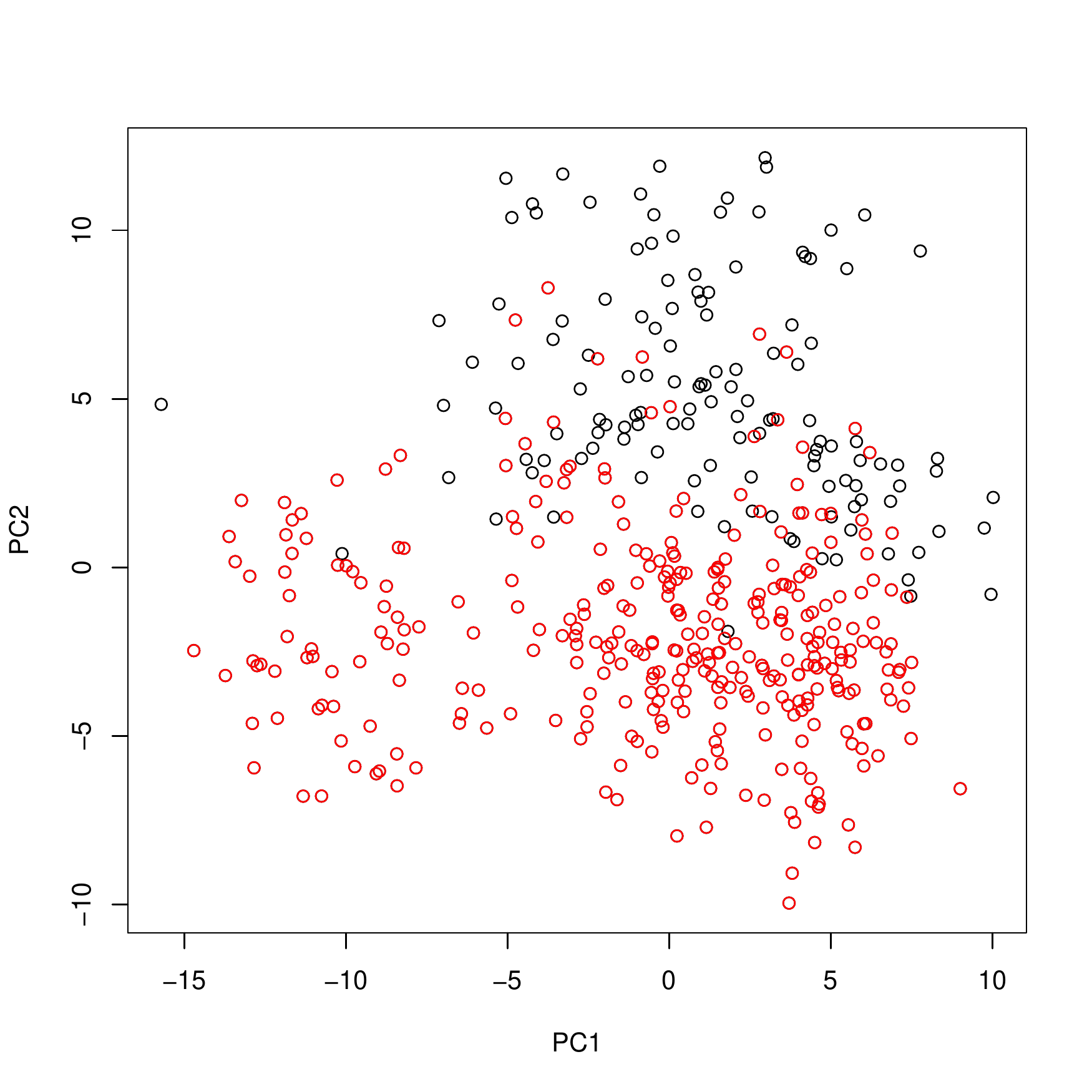}
  \caption{log compositional PCA}
\end{subfigure}
\begin{subfigure}{0.32\textwidth}
  \includegraphics[width=5cm]{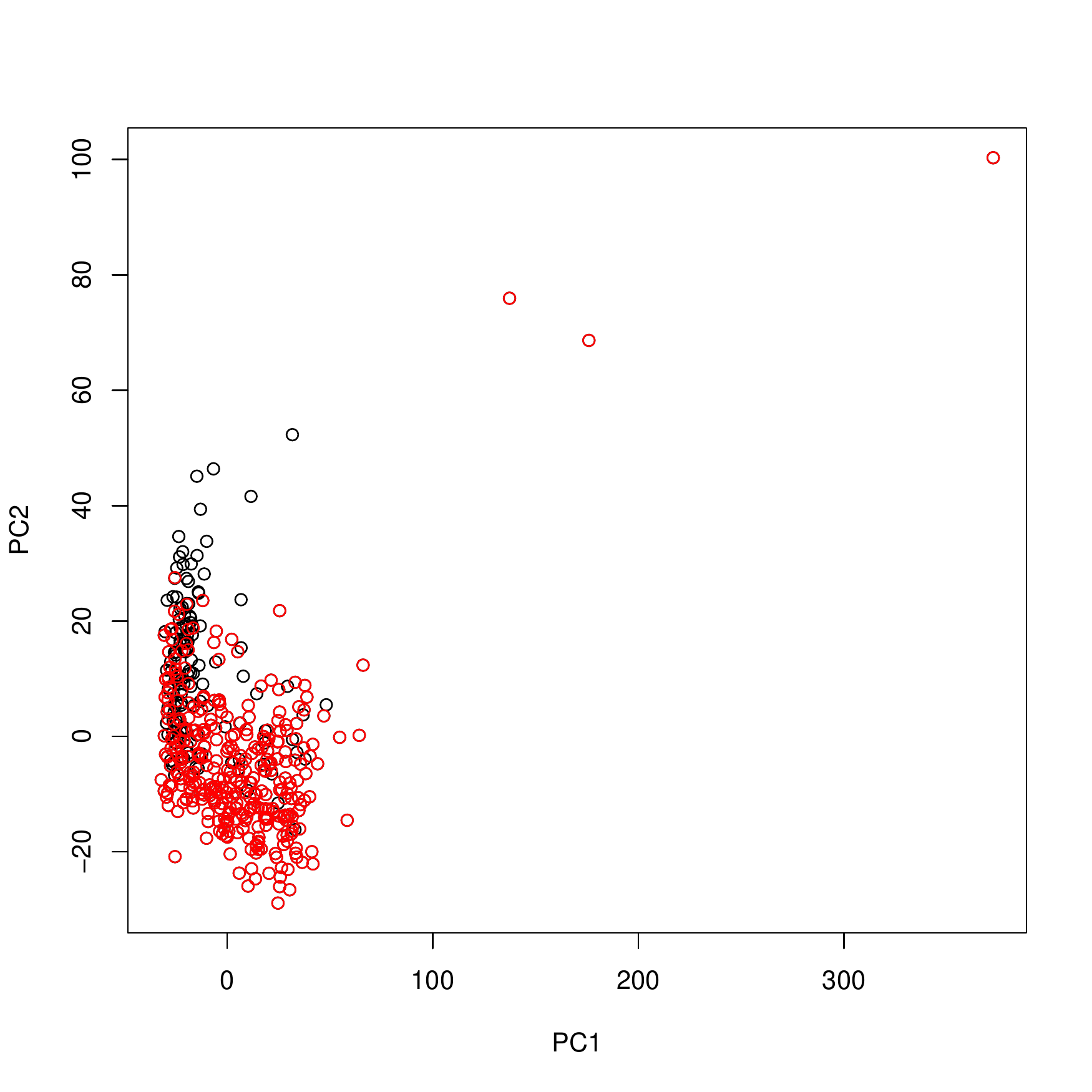}
  \caption{PLNPCA}
\end{subfigure}
\begin{subfigure}{0.32\textwidth}
  \includegraphics[width=5cm]{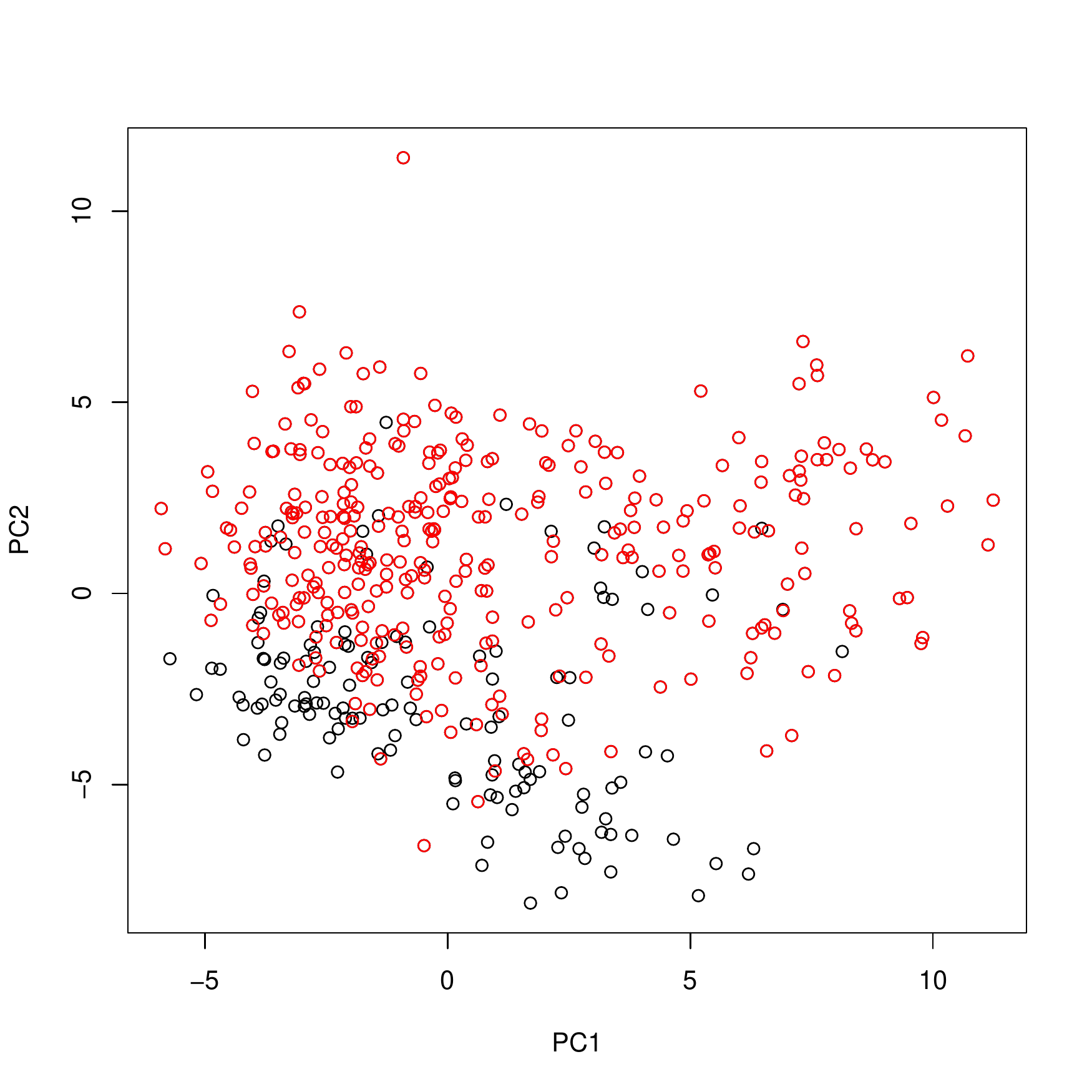}
  \caption{Poisson PCA}
\end{subfigure}

\end{figure}

Figure~\ref{MovPicTongueOrder} shows the projection of the tongue
data. For this dataset, Poisson PCA shows a better separation between
two individuals. PLNPCA has again been heavily influenced by a few
outliers. Log compositional PCA shows a clear clustering, but the
clustering is not related to the individual. Further examination
reveals that the first principal component for both individuals is
heavily influenced by a single order from phylum Cyanobacteria (see
Supplementary Figure~\ref{Cyanobacteria}). This applies to
both Poisson PCA and log compositional PCA. However, it is more
obvious in the case of log compositional PCA, and there is a clear gap
between cases where the observed count of cyanobacteria is zero and where it is
non-zero, which leads to the two clusters seen in
Figure~\ref{MovPicTongueOrder}(a). The abundance of Cyanobacteria does
not vary much between the two individuals, so because this abundance
dominates the first principal component for log compositional PCA, we
do not distinguish between the individuals using this method. Because
Poisson PCA does not attach so much weight to the zero values, it is
able to identify some signal, even with the Cyanobacteria abundance
dominating the first principal component.

\begin{figure}[htbp]

\caption{First two principal components: Tongue data grouped at order level}\label{MovPicTongueOrder}

\begin{subfigure}{0.32\textwidth}
  \includegraphics[width=5cm]{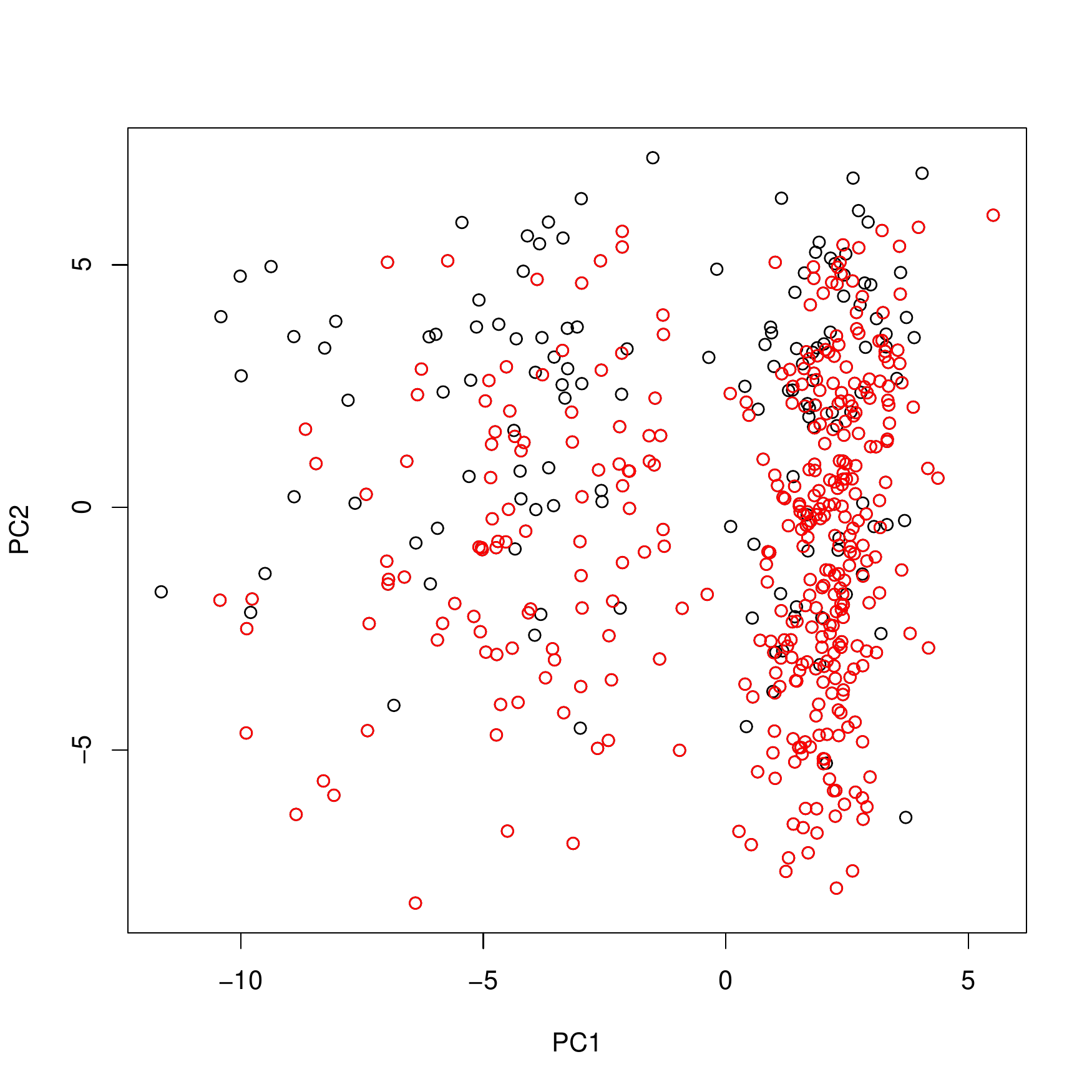}
  \caption{log compositional PCA}
\end{subfigure}
\begin{subfigure}{0.32\textwidth}
  \includegraphics[width=5cm]{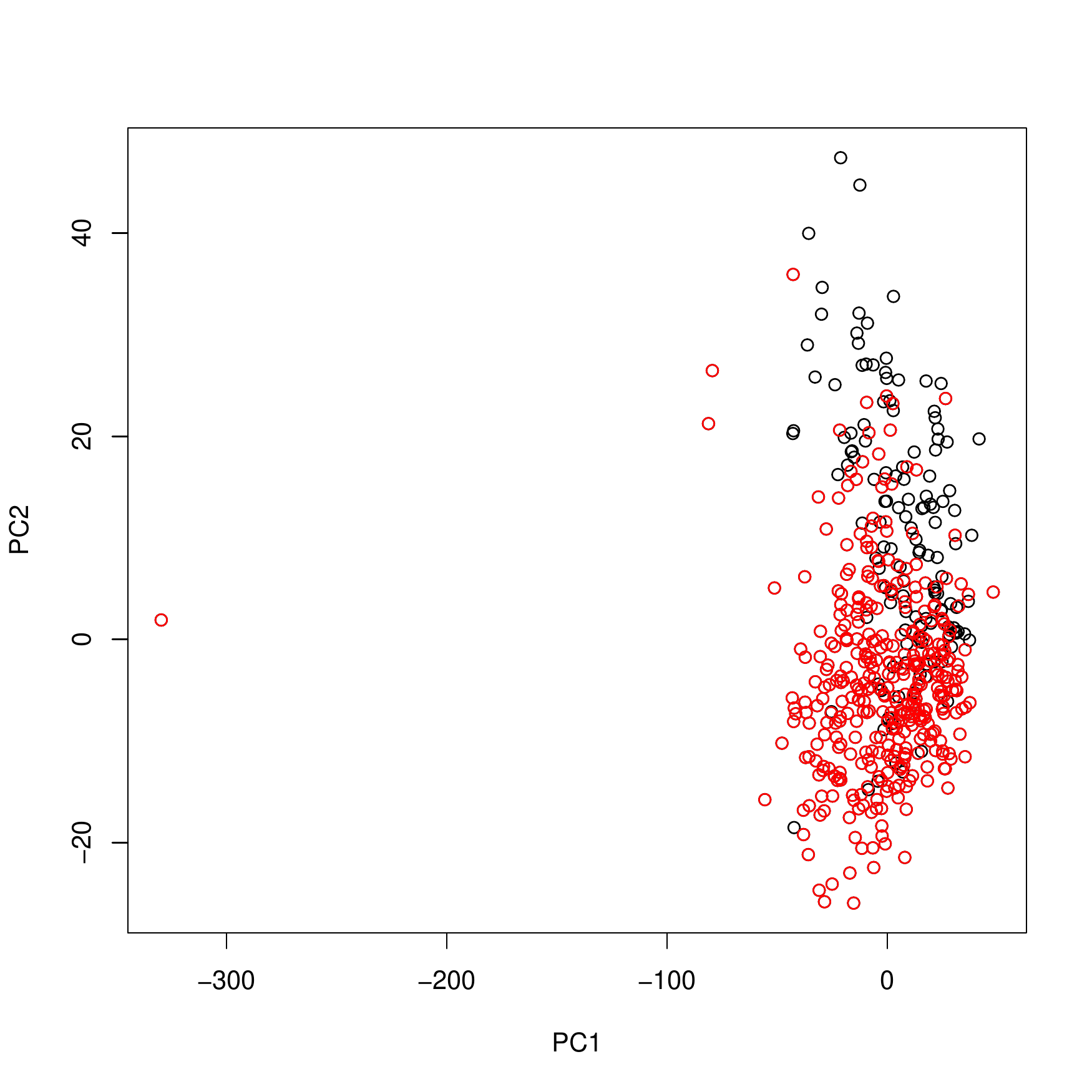}
  \caption{PLN}
\end{subfigure}
\begin{subfigure}{0.32\textwidth}
  \includegraphics[width=5cm]{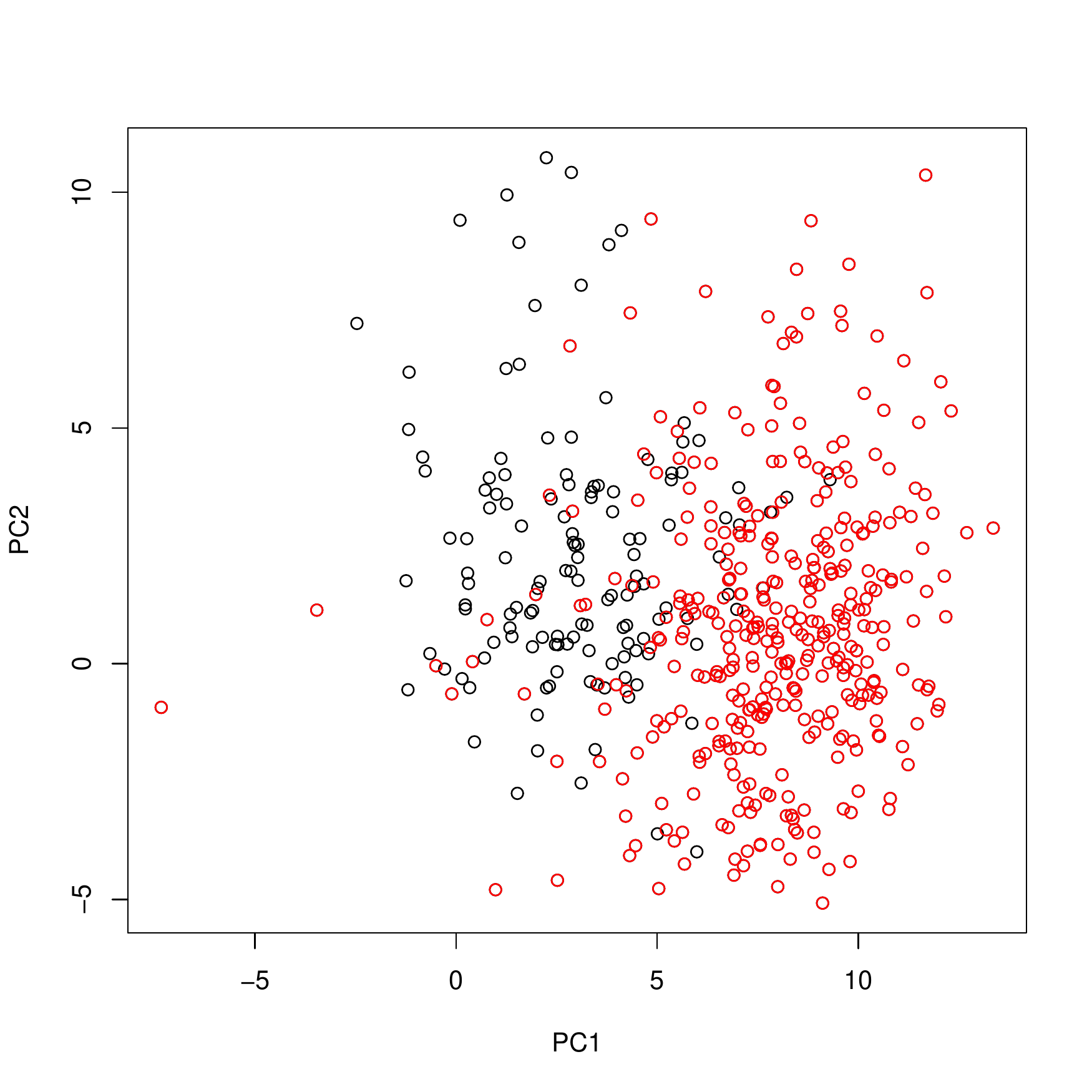}
  \caption{Poisson PCA}
\end{subfigure}

\end{figure}

Figure~\ref{MovPicLPalmOrder} shows the results for the left
palm. Previous research (e.g. Shafiei {\em et al.} 2015, Cai {\em et al.}
2016) has found these samples harder to classify than the gut or
tongue data, so it is not surprising that all PCA methods show more
mixing of the two groups. However, we see the same patterns as for the
tongue data, with Poisson PCA producing the best separation, PLNPCA
being adversely influenced by outliers and log compositional PCA
producing a false clustering. This time, the false clustering can be
seen to be closely related to sequencing depth (see Supplementary
Figure~\ref{MovPicLPalmOrderSeqDepth}).

\begin{figure}[htbp]

\caption{First two principal components: Left palm data grouped at order level}\label{MovPicLPalmOrder}

\begin{subfigure}{0.32\textwidth}
  \includegraphics[width=5cm]{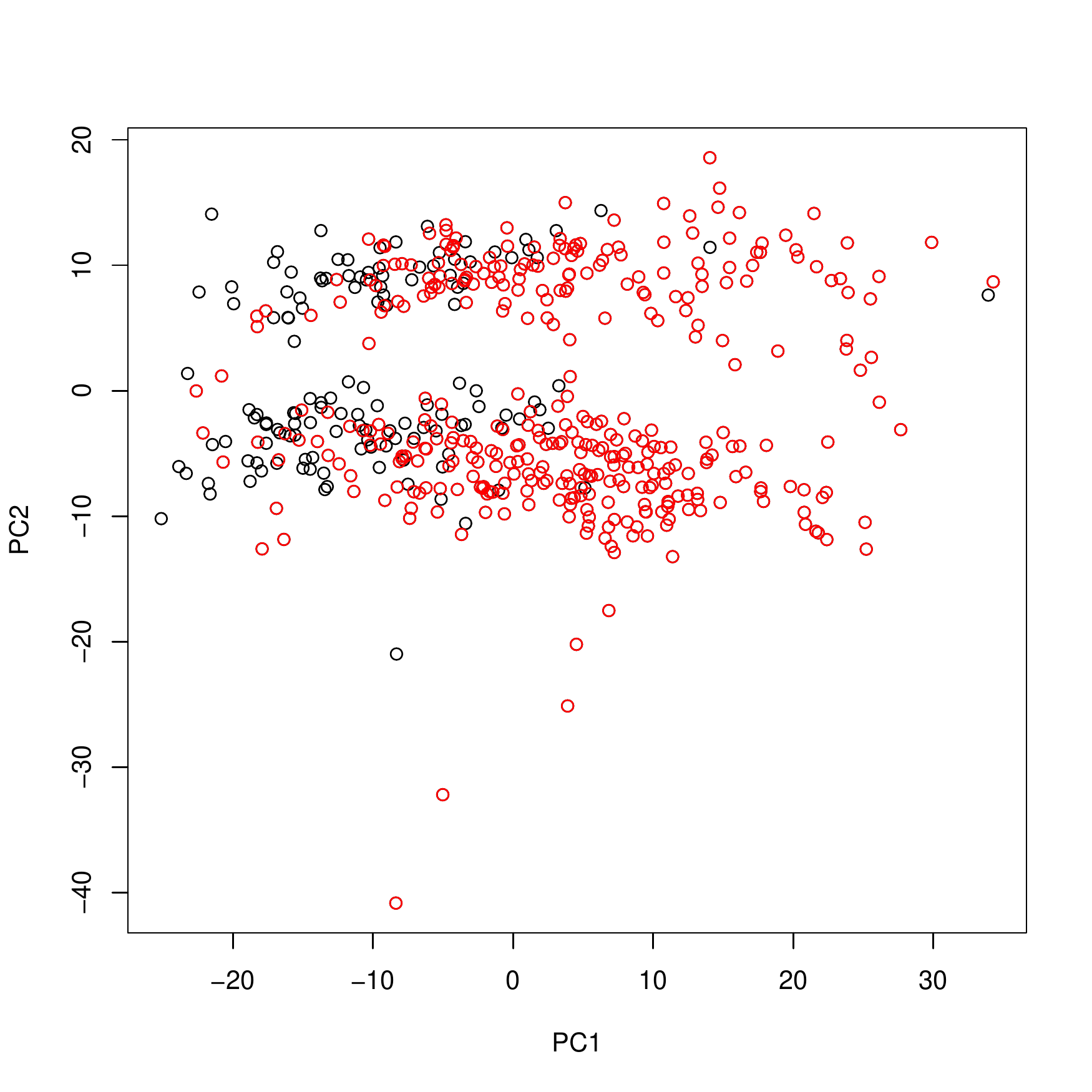}
  \caption{log compositional PCA}
\end{subfigure}
\begin{subfigure}{0.32\textwidth}
  \includegraphics[width=5cm]{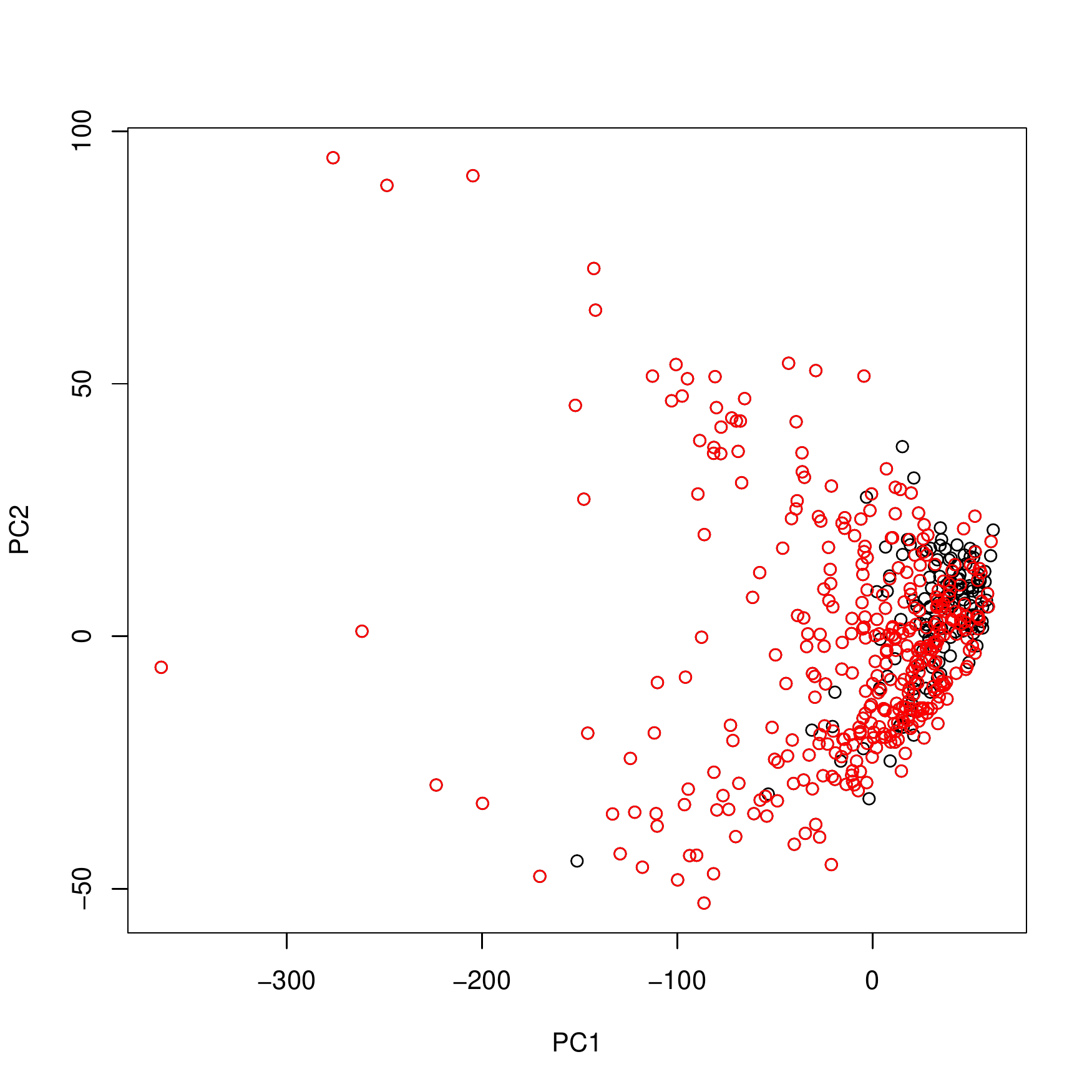}
  \caption{PLN}
\end{subfigure}
\begin{subfigure}{0.32\textwidth}
  \includegraphics[width=5cm]{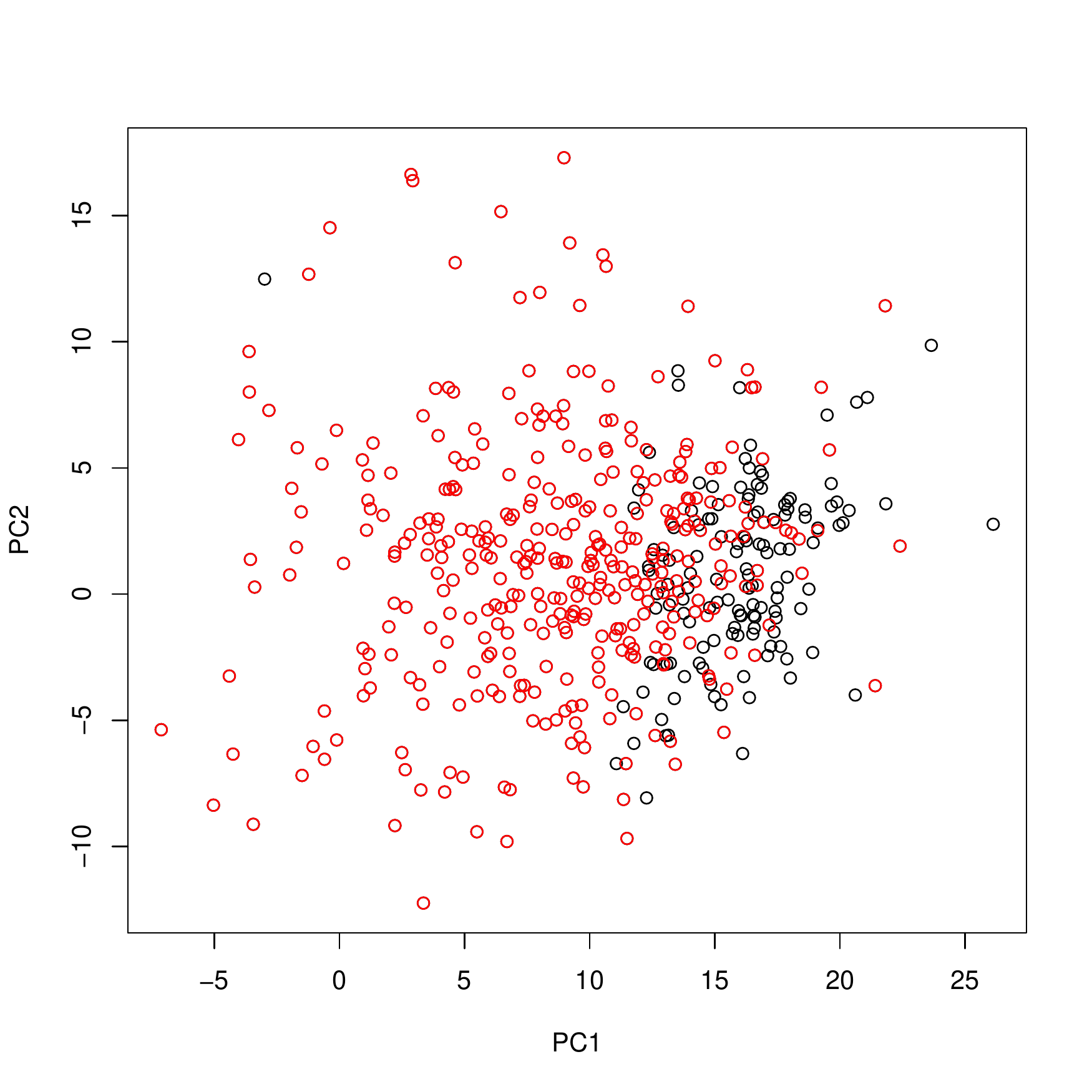}
  \caption{Poisson PCA}
\end{subfigure}

\end{figure}

Figure~\ref{MovPicRPalmOrder} shows the results for the right
palm. Previous studies have found the right palms the most difficult
locations to distinguish between the individuals, and the PCA results
confirm this, with substantial mixing of the groups using all PCA
methods, with log compositional PCA perhaps showing the best
separation (certainly the best linear separation).

\begin{figure}[htbp]

\caption{First two principal components: Right Palm data grouped at order level}\label{MovPicRPalmOrder}

\begin{subfigure}{0.32\textwidth}
  \includegraphics[width=5cm]{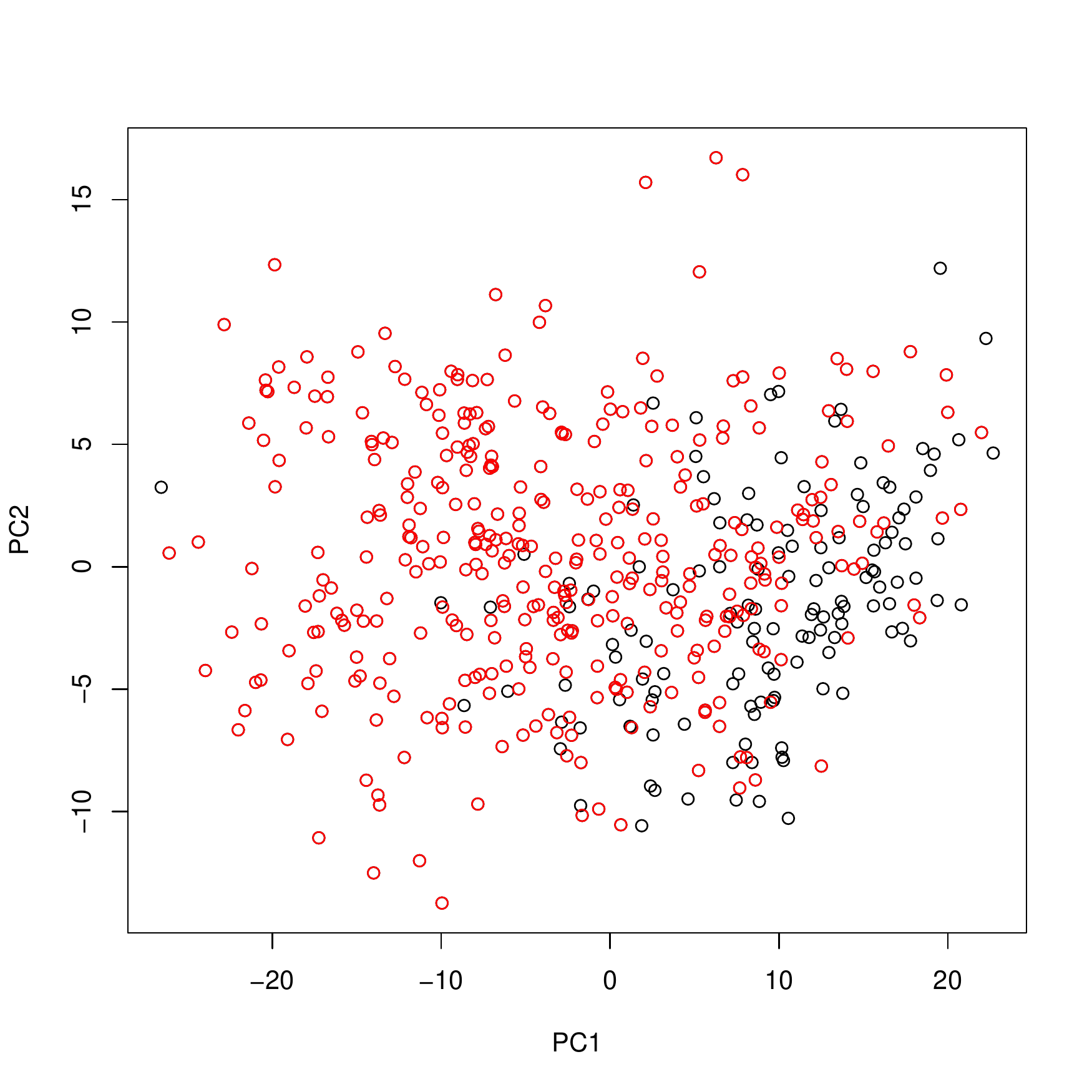}
  \caption{log compositional PCA}
\end{subfigure}
\begin{subfigure}{0.32\textwidth}
  \includegraphics[width=5cm]{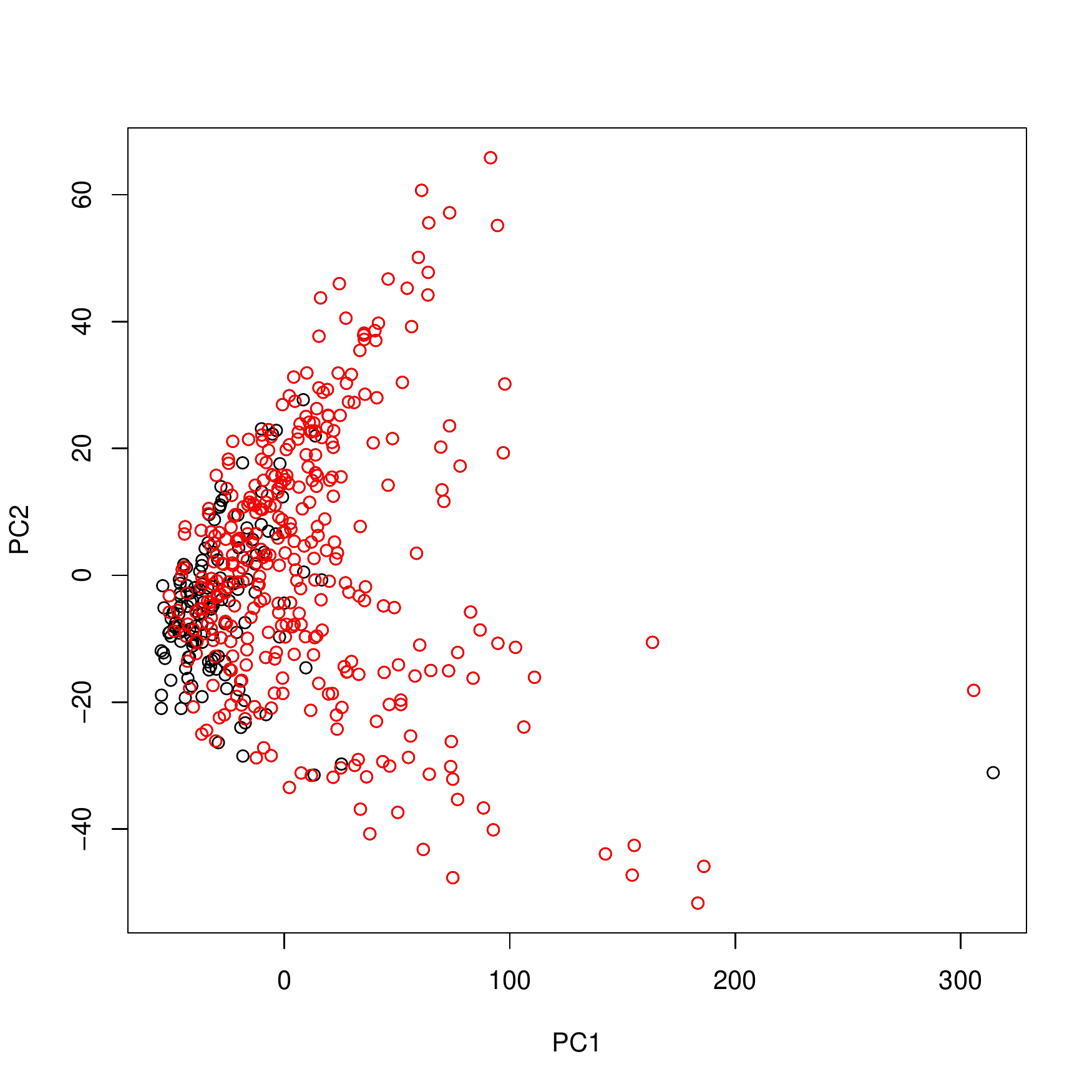}
  \caption{PLN}
\end{subfigure}
\begin{subfigure}{0.32\textwidth}
  \includegraphics[width=5cm]{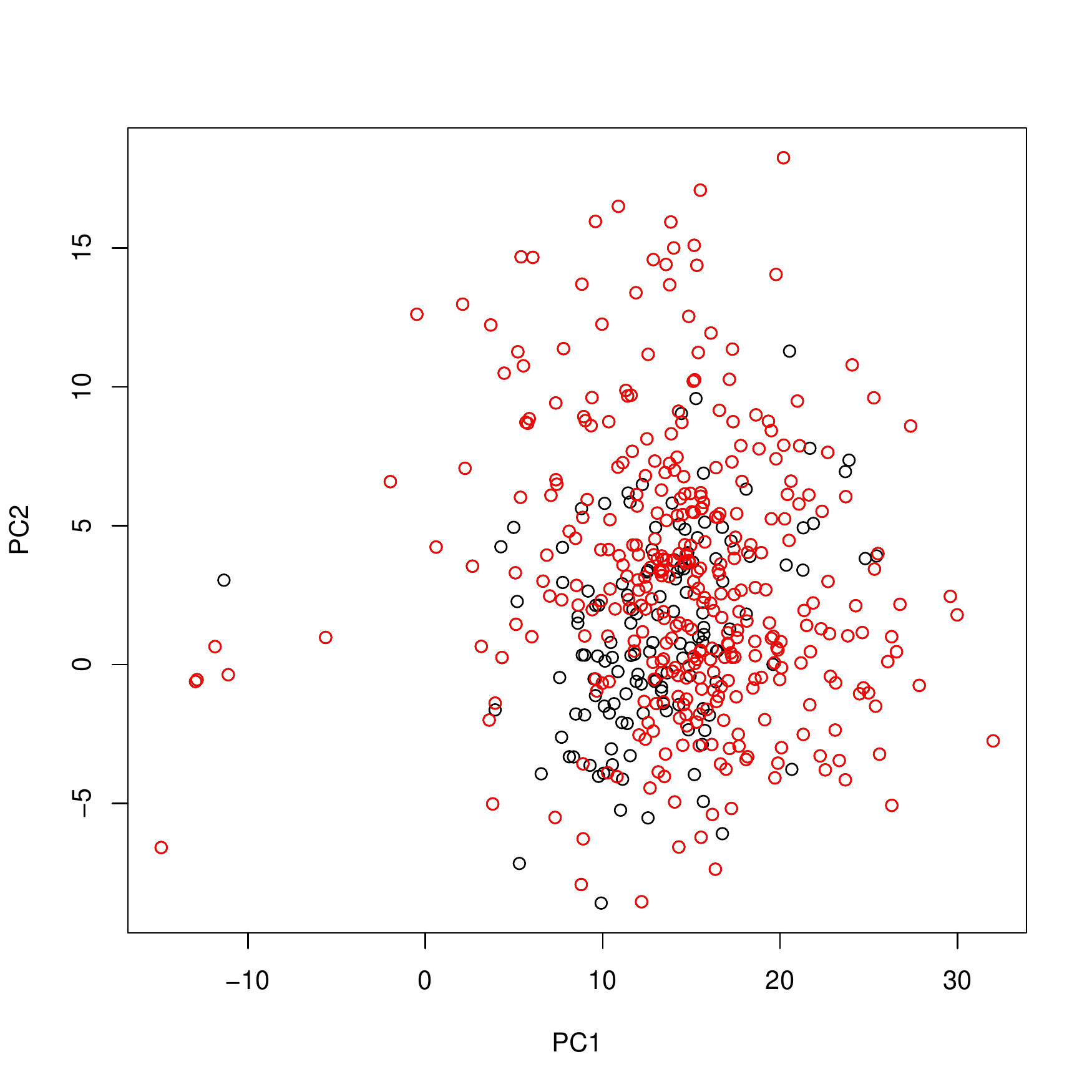}
  \caption{Poisson PCA}
\end{subfigure}

\end{figure}

\section{Conclusions and Future Work}\label{Conclusions}

\subsection{Conclusions}

We have developed methods for correcting for Poisson measurement error
in estimating the variance of a latent variable. Our methods can
estimate the variance of the latent Poisson means, or the variance of
some transformation of the latent means. We have also incorporated
ways to adjust our methods to deal with additional multiplicative
noise in the Poisson means (sequencing depth). We have tested our
methods across a range of scenarios and compared them with existing
methods. We found that our method performs well on both simulated and
real data, and is not so heavily influenced by outliers as parametric
methods, and successfully adjusts for sequencing depth unlike na\"ive
PCA on the logarithm of the compositional data.

While our method has focused on Poisson noise, so is less
broad than some parametric methods that extend to exponential family
distributions, our work is broader in that it can be applied to
arbitrary transformations of the data, not just the logarithm. The
ideas behind our work should easily be applicable to other families of
noise distribution, particularly discrete. We have also developed
methods for dealing with sequencing depth, that are not available in
other methods. Furthermore, our method is an order of magnitude faster
than parametric methods, because it does not require optimisation of
all parameters.

\subsection{Future Work}

There are a number of avenues for extending this research. Firstly,
for the log transformation, we were not able to find a perfect
unbiased estimator, and instead used an approximation. Further work to
find a better approximation and corresponding estimators should lead
to better performance of the method. Secondly, while we have focussed
on Poisson measurement error correction, the same ideas could be
applicable to other measurement error correction for PCA. Our method
was based on the fact that the Poisson distribution has one parameter,
so the measurement error variance can be directly estimated from the
data. This technique could be applied to other single parameter
measurement error distributions.  Thirdly, Liu {\em et al.} (2018)
used shrinkage methods to improve the finite-sample performance of the
untransformed Poisson PCA estimators; similar ideas could be applied
to improve the finite sample performance of the sequencing depth
corrected estimators and the transformed estimators. Finally, a large
area of research in PCA is the implementation of sparseness
constraints. Since our method provides an unbiassed variance estimator
for the Poisson means, it should be straightforward to implement a
standard regularisation method to ensure consistency in the $p\gg n$
case.

\appendix

\section{Proof of Consistency of Transformed Variance Estimator}\label{ConsistencyProof}

\begin{theorem*}[\ref{Consistency}]
If $e^{2\lambda}f(\lambda)^2=\sum_{n=0}^\infty
\frac{h(n)\lambda^n}{n!}$ is a globally convergent Taylor series,
$\Lambda$ is a non-negative random variable with finite raw moments $\mu_n$
and $\sum_{n=0}^\infty\frac{\mu_n|h_n|}{n!}$ converges, then the
estimator \eqref{TransformedVarianceEstimator} is consistent.
\end{theorem*}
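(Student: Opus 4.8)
The plan is to exploit the two-term structure of the estimator~\eqref{TransformedVarianceEstimator} directly. Write it as $A_n-B_n$, where $A_n=\frac1n\sum_{i=1}^n\sum_{k=0}^{X_{il}}(-1)^{X_{il}-k}\binom{X_{il}}{k}h(k)$ is an ordinary sample average and $B_n=\frac{1}{n(n-1)}\sum_{i\ne j}g(X_{il})g(X_{jl})$ is a $U$-statistic of order two with symmetric kernel $\psi(x,y)=g(x)g(y)$. I would show $A_n\xrightarrow{p}{\mathbb E}(f(\mathbf{\Lambda_l})^2)$ and $B_n\xrightarrow{p}{\mathbb E}(f(\mathbf{\Lambda_l}))^2$, so that $A_n-B_n\xrightarrow{p}{\mathbb E}(f(\mathbf{\Lambda_l})^2)-{\mathbb E}(f(\mathbf{\Lambda_l}))^2=\Var(f(\mathbf{\Lambda_l}))$, which is the assertion.

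For $A_n$, set $Z_i=\sum_{k\geqslant0}t_k(X_{il})h(k)$ with $t_k(x)=(-1)^{x-k}\binom{x}{k}$; since $t_k(x)=0$ for $k>x$ this agrees with the finite inner sum. Conditionally on $\mathbf{\Lambda_l}=\lambda$ the summand is unbiased for $f(\lambda)^2$: from ${\mathbb E}(t_k(X)\mid\mathbf{\Lambda}=\lambda)=e^{-2\lambda}\frac{\lambda^k}{k!}$ one gets ${\mathbb E}(Z_i\mid\mathbf{\Lambda_l}=\lambda)=\sum_k h(k)e^{-2\lambda}\frac{\lambda^k}{k!}=e^{-2\lambda}(e^{2\lambda}f(\lambda)^2)=f(\lambda)^2$. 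The $Z_i$ are i.i.d., so by the law of large numbers it remains only to check ${\mathbb E}|Z_1|<\infty$ and, via the tower property, ${\mathbb E}(Z_1)={\mathbb E}(f(\mathbf{\Lambda_l})^2)$. Both follow from one integrability estimate: because ${\mathbb E}(\binom{X}{k}\mid\mathbf{\Lambda}=\lambda)=\frac{\lambda^k}{k!}$, Tonelli gives ${\mathbb E}(\sum_k\binom{X}{k}|h(k)|)=\sum_k\frac{|h(k)|\mu_k}{k!}$, which is finite by hypothesis. This bound dominates $|Z_1|$, legitimises interchanging the $k$-sum with both expectations, and shows ${\mathbb E}(Z_1)={\mathbb E}(f(\mathbf{\Lambda_l})^2)<\infty$; hence $A_n\xrightarrow{p}{\mathbb E}(f(\mathbf{\Lambda_l})^2)$.

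For $B_n$, conditional independence of distinct rows gives ${\mathbb E}\,\psi(X_{il},X_{jl})={\mathbb E}(g(X_l))^2={\mathbb E}(f(\mathbf{\Lambda_l}))^2$, and the law of large numbers for $U$-statistics yields $B_n\xrightarrow{p}{\mathbb E}(f(\mathbf{\Lambda_l}))^2$ once the kernel has finite first moment, i.e. ${\mathbb E}|g(X_l)|<\infty$. I would secure this from the same hypotheses: using $e^{-2\lambda}\leqslant1$ and $f(\lambda)^2=\sum_k h(k)e^{-2\lambda}\frac{\lambda^k}{k!}$, Tonelli bounds ${\mathbb E}(f(\mathbf{\Lambda_l})^2)\leqslant\sum_k\frac{|h(k)|\mu_k}{k!}<\infty$, while a Jensen/Cauchy--Schwarz step bounds ${\mathbb E}|g(X_l)|$ by $({\mathbb E}(g(X_l)^2))^{1/2}\leqslant(\sum_k g(k)^2\mu_k/k!)^{1/2}$; finiteness of the latter is the one point that needs a short separate verification under the stated convergence assumptions.

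The main obstacle is $A_n$, and precisely the legitimacy of these interchanges. Each $Z_i=\sum_k(-1)^{X_{il}-k}\binom{X_{il}}{k}h(k)$ is an alternating sum whose terms $\binom{X_{il}}{k}$ can be enormous when the observed count is large, so the estimator is nowhere near uniformly bounded and term-by-term limiting arguments fail outright. The resolution is that although $\binom{X}{k}$ is large, its expectation $\mu_k/k!$ is small, and it is exactly the hypothesis $\sum_k\mu_k|h(k)|/k!<\infty$ --- together with global convergence of the Taylor series, which guarantees $\sum_k|h(k)|\lambda^k/k!<\infty$ for each fixed $\lambda$ --- that turns this cancellation into genuine $L^1$ domination and absolute convergence. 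Establishing this domination cleanly, so that Fubini/Tonelli and the law of large numbers both apply and identify the correct limit, is where the real work lies; the finiteness of the $B_n$ kernel moment is the only remaining technical loose end.
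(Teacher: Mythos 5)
Your argument is essentially the paper's own proof: both rest on unbiasedness plus the law of large numbers applied separately to the two terms of \eqref{TransformedVarianceEstimator}, with the key integrability step being exactly your Tonelli computation ${\mathbb E}\bigl(\sum_k\binom{X}{k}|h(k)|\bigr)=\sum_k\mu_k|h(k)|/k!<\infty$ dominating the alternating inner sum. The one loose end you flag --- ${\mathbb E}|g(X_l)|<\infty$ for the $U$-statistic term --- is dispatched in the paper not via ${\mathbb E}(g(X)^2)$ (a route the stated hypotheses do not actually support, since ${\mathbb E}(g(X)^2\mid\Lambda)=\Var(g(X)\mid\Lambda)+f(\Lambda)^2$ and the conditional variance is unconstrained by the assumptions on $h$) but by the cruder observation that ${\mathbb E}(g(X))={\mathbb E}(f(\Lambda))$ must be finite because $\Var(f(\Lambda))$, the quantity being estimated, is implicitly assumed finite.
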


\begin{proof}
Since the estimator \eqref{TransformedVarianceEstimator} is unbiassed,
the strong law of large numbers tells us that this is a consistent
estimator provided ${\mathbb E}\left(g(\mathbf{X})\right)$ and ${\mathbb
  E}\left(\sum_{k=0}^\mathbf{X} (-1)^{\mathbf{X}-k}{\mathbf{X}\choose k}h(k)\right)$ are both
finite.  We have $${\mathbb E}(g(\mathbf{X}))={\mathbb E}({\mathbb
  E}(g(\mathbf{X})|\Lambda))={\mathbb E}(f(\Lambda))$$ which must be finite
since otherwise the variance of $f(\Lambda)$ which we are trying to
estimate will not be finite.

We have that
\begin{align*}
{\mathbb E}\left(\sum_{k=0}^{\mathbf{X}} (-1)^{\mathbf{X}-k}{\mathbf{X}\choose
  k}h(k)\right)&={\mathbb E}\left({\mathbb E}\left(\sum_{k=0}^{\mathbf{X}} (-1)^{\mathbf{X}-k}{\mathbf{X}\choose
  k}h(k)\middle|\Lambda\right)\right)\\
&\leqslant{\mathbb E}\left({\mathbb E}\left(\sum_{k=0}^{\mathbf{X}} {\mathbf{X}\choose
  k}|h(k)|\middle|\Lambda\right)\right)\\
&={\mathbb E}\left(\sum_{k=0}^\infty\frac{\Lambda^k}{k!}|h(k)|\right)
\end{align*}
We are given that 
$\sum_{k=0}^\infty\frac{{\mathbb E}\left(\Lambda^k\right)}{k!}|h(k)|$
converges, so by Fubini's theorem, 
$${\mathbb
  E}\left(\sum_{k=0}^\infty\frac{\Lambda^k}{k!}|h(k)|\right)=\sum_{k=0}^\infty\frac{{\mathbb
    E}\left(\Lambda^k\right)}{k!}|h(k)|$$
is also finite.
\end{proof}

\section{Estimating Conditional Variance}\label{Computingh}

We are interested in the logarithm case, so for large $\mathbf{X}$, we have
$g(\mathbf{X})=\log(\mathbf{X})$. We now want to estimate $\Var(\log(\mathbf{X})|\Lambda)$. We
will estimate it by taking the average of some function $\tilde{h}(X)$ for
each $X$. That is, we want to design a function $\tilde{h}$ such that
$${\mathbb E}(\tilde{h}(X)|\Lambda)\approx \Var(\log(X)|\Lambda)$$
We will assume that $\mathbf{\Lambda}$ is relatively large. We note that $$\Var(\log(\mathbf{X})|\Lambda)=\Var(\log(\mathbf{X})-\log(\Lambda)|\Lambda)=\Var\left(\log\left(1+\frac{\mathbf{X}-\Lambda}{\Lambda}\right)\middle|\Lambda\right)$$
For large $\mathbf{\Lambda}$, we have that $D=\frac{\mathbf{X}-\Lambda}{\Lambda}$ is
almost certain to be much less than 1, so we can use the Taylor
expansion
$$\log(1+D)=D-\frac{D^2}{2}+\frac{D^3}{3}-\cdots$$
We truncate this at the $T$th term, for some $T$, then we have
\begin{align*}
\Var(\log(\mathbf{X})|\Lambda)&\approx {\mathbb
  E}\left(\left(D-\frac{D^2}{2}+\cdots+(-1)^{T-1}\frac{D^T}{T}\right)^2\middle|\Lambda\right)-\left({\mathbb
  E}\left(D\middle|\Lambda\right)-{\mathbb
  E}\left(\frac{D^2}{2}\middle|\Lambda\right)+\cdots+(-1)^{T-1}{\mathbb
  E}\left(\frac{D^T}{T}\middle|\Lambda\right)\right)^2\nonumber\\
&\approx \sum_{n=2}^{2T} (-1)^n{\mathbb E}(D^n|\Lambda)\sum_{i=1}^{(n-1)\land
  T}\frac{1}{i(n-i)}-\sum_{n=2}^{2T} (-1)^n\sum_{i=1}^{(n-1)\land T}\frac{{\mathbb E}(D^i|\Lambda){\mathbb E}(D^{n-i}|\Lambda)}{i(n-i)}
\end{align*}
We can truncate this again at $T$ (i.e. remove all powers of $D$
larger than $T$) to get
\begin{equation}\label{eqhnest}
\Var(\log(\mathbf{X})|\Lambda)\approx \sum_{n=2}^{T} (-1)^n{\mathbb E}(D^n|\Lambda)\sum_{i=1}^{(n-1)}\frac{1}{i(n-i)}-\sum_{n=2}^{T} (-1)^n\sum_{i=1}^{(n-1)}\frac{{\mathbb E}(D^i|\Lambda){\mathbb E}(D^{n-i}|\Lambda)}{i(n-i)}\end{equation}
Now we compute
\begin{align*}
  {\mathbb
    E}(D^n|\Lambda)&=\Lambda^{-n}\sum_{k=0}^n(-1)^{n-k}\binom{n}{k}\Lambda^{n-k}{\mathbb
  E }(\mathbf{X}^k|\Lambda)\\
&=\sum_{k=0}^n(-1)^{n-k}\binom{n}{k}\Lambda^{-k}\sum_{j=0}^k\stirling{k}{j}\Lambda^j\\
&=\sum_{l=0}^n\Lambda^{-l}\sum_{k=l}^{n}(-1)^{n-k}\binom{n}{k}\stirling{k}{k-l}\\
&=\sum_{l=0}^n\Lambda^{-l}\sum_{m=0}^{n-l}(-1)^{m}\binom{n}{m}\stirling{n-m}{n-l-m}\\
\end{align*}
where $\stirling{k}{j}$ is a Stirling number of the second kind. The
second last equation comes from setting $l=k-j$, and the last equation
comes from setting $m=n-k$. We have that
$\binom{n}{m}\stirling{n-m}{n-l-m}$ is the number of partitions of a
set of size $n$ into $n-l$ blocks with $m$ chosen singleton
blocks. This gives us
that $$\sum_{m=0}^{n-l}(-1)^{m}\binom{n}{m}\stirling{n-m}{n-l-m}$$ is
the number of partitions of $n$ with $n-l$ blocks and no singletons,
since for any partition into $n-l$ blocks, of which $M$ are singletons, this
partition is included $\binom{M}{m}$ times in the $m$th term in the
series (once for each subset of $m$ singletons). Therefore the total
term assigned to this partition is
$$\sum_{m=0}^n
(-1)^m\binom{M}{m}=\left\{\begin{array}{ll}1&\textrm{if }M=0\\
0&\textrm{otherwise}\end{array}\right.$$
We will denote the number of partitions with $n-l$ blocks and no
singletons by $\Stirling{n}{n-l}$. We therefore have $${\mathbb E}(D^n|\Lambda)=\sum_{l=0}^n\Stirling{n}{n-l}\Lambda^{-l}=\sum_{k=0}^{\left\lfloor\frac{n}{2}\right\rfloor}\Stirling{n}{k}\Lambda^{k-n}$$
We now substitute this into Equation~\eqref{eqhnest} (noting that for
$n>0$, we have $\Stirling{n}{0}=0$, and for $k>\frac{n}{2}$, we have $\Stirling{n}{k}=0$) to get:
\begin{align*}
\Var(\log(\mathbf{X})|\Lambda)&\approx \sum_{n=2}^{T}(-1)^n\Lambda^{-n}
\left(\sum_{k=1}^{\left\lfloor\frac{n}{2}\right\rfloor}\Stirling{n}{k}\Lambda^k\right)\left(\sum_{i=1}^{(n-1)}\frac{1}{i(n-i)}\right)-\sum_{n=2}^{T}(-1)^n\Lambda^{-n}
\sum_{i=1}^{(n-1)}\frac{\left(\sum_{k=1}^{\left\lfloor\frac{i}{2}\right\rfloor}\Stirling{i}{k}\Lambda^k\right)\left(\sum_{l=1}^{\left\lfloor\frac{n-i}{2}\right\rfloor}\Stirling{n-i}{l}\Lambda^l\right)}{i(n-i)}\\
&= \sum_{k=1}^{T} 
\left(\sum_{n=2k}^{T}(-1)^n\left(\sum_{i=1}^{(n-1)}\frac{1}{i(n-i)}\right)\Stirling{n}{k}\Lambda^{k-n}\right)-
\sum_{m=2}^{\left\lfloor\frac{T}{2}\right\rfloor}\left( \sum_{k=1}^{m-1}\sum_{n=2m}^{T}(-1)^n\Lambda^{m-n}
\sum_{i=2k}^{n+2k-2m}\frac{\left(\Stirling{i}{k}\Stirling{n-i}{m-k}\right)}{i(n-i)}\right)\\
&= \sum_{l=1}^{T} 
\left(\sum_{n=l+1}^{T\land 2l}(-1)^n\left(\sum_{i=1}^{(n-1)}\frac{1}{i(n-i)}\right)\Stirling{n}{n-l}\Lambda^{-l}\right)-
\sum_{l=2}^{T}\Lambda^{-l}\left( \sum_{n=l+2}^{T\land 2l}(-1)^n\sum_{k=1}^{n-l-1}
\sum_{i=2k}^{2l+2k-n}\frac{\left(\Stirling{i}{k}\Stirling{n-i}{n-l-k}\right)}{i(n-i)}\right)
\end{align*}
Recall that $\frac{1}{(\mathbf{X}+1)\cdots(\mathbf{X}+l)}$ is an estimator for $\Lambda^{-l}$, so
substituting in this estimator gives us the estimator $\tilde{h}(\mathbf{X})$ as
$$\sum_{l=1}^{T-1} a_l \prod_{j=1}^{l}(\mathbf{X}+j)^{-1}$$
where
$$a_l=\sum_{n=l+1}^{T\land 2l}(-1)^n\sum_{i=1}^{n-1}\frac{1}{i(n-i)}\left(\Stirling{n}{n-l}-\sum_{k=\left\lceil\frac{i+n}{2}\right\rceil-l}^{\left\lfloor\frac{i}{2}\right\rfloor}\Stirling{i}{k}\Stirling{n-i}{n-l-k}\right)$$
Our Poisson PCA package truncates at $T=9$. The terms $a_l$ can be
precalculated once for all data points, so computation of $\tilde{h}$
is quick.

\section{Alternative Estimator for Conditional Variance}\label{generallowvarianceCVestimator}

Recall that the general estimator for conditional variance of a
transformation $g(X)$ depends on computing an estimator for
$e^{-2\Lambda}\frac{\Lambda^n}{n!}$ for an integer $n$. The unbiased
estimator for this quantity is $t_n=(-1)^{X-n}\binom{X}{n}$.  However,
this estimator can have very high variance, resulting in poor
finite-sample performance. For the special case $g(X)=\log(X)$, we
computed a useable approximation in Appendix~\ref{Computingh}. For the
general transformation case, we do not have a good general method. In
this section, we suggest one possible approach that could be further studied.

The idea is to try to find two observations, $X_1$ and $X_2$ with the same value of
$\mathbf{\Lambda}$. Then we
have $$t'_n(X_1,X_2)=\left\{\begin{array}{ll}1&\textrm{if
}X_1+X_2=n\\0&\textrm{otherwise}\end{array}\right.$$ as an estimator
for $e^{-2\lambda} \frac{(2\lambda)^n}{n!}$. We can then average this
estimator over all pairs of values of $X_1$ and $X_2$ that come from
approximately the same value of $\mathbf{\Lambda}$.

In practice we don't know whether two values $X_1$ and $X_2$ come from
the same value of $\Lambda$. We therefore weight all pairs by how
plausible it is that they have the same value of
$\mathbf{\Lambda}$. We use the weight derived from the likelihood
ratio statistic:
  $$W_{ij}=\frac{(X_i+X_j)^{X_i+X_j}}{2^{X_i+X_j}X_i^{X_i}X_j^{X_j}}$$ and take the
  weighted average of $t'_n(X_1,X_2)$ divided by $2^n$ to get the estimator of $e^{-2\lambda}  \frac{\lambda^n}{n!}$.

  That is, our estimator is
  $$t_n'=\frac{\sum_{\left\{(i,j)\middle|X_i+X_j=n\right\}}
    W_{ij}}{\sum_{i,j=1}^n W_{ij}}$$


\begin{thebibliography}{99}


\bibitem{Arrigo} Arrigo, K. (2005) Marine microorganism and global
  nutrient cycles {\em Nature 437}, 349--355.


\bibitem{Cai} 
Cai, Y., Gu, H., Kenney, T. (2017). Learning Microbial Community Structures with
Supervised and Unsupervised Non-negative Matrix Factorization. {\em
  Microbiome 5}, 110.


\bibitem{caporaso2011moving}
 Caporaso, J. G., Lauber, C. L., Costello, E. K., Berg-Lyons, D.,
  Gonzalez, A., Stombaugh, J., Knights, D, Gajer, P., Ravel, J.,
  Fierer, N., et~al. (2011) 
\newblock Moving pictures of the human microbiome.
\newblock {\em Genome biology}, 12(5):R50.


\bibitem{PLN} 
Chiquet, J., 
Mariadassou, M., Robin, S. (2018)
Variational  Inference  for  Probabilistic
Poisson  PCA {\em ArXiv:1703.06633v5}.
  
\bibitem{CollinsDasguptaSchapire}  Collins, M., Dasgupta, S., Schapire, R.E. (2001). A generalization of
principal components analysis to the exponential family. {\em Advances in Neural Information Processing Systems 14},
 617--624.
  

\bibitem{Fuji} Fujimura, K., Slusher, N., Cabana, M. and Lynch, S. (2010). Role of the gut microbiota in defining human health. {\em Expert review of anti-infective therapy 8}, 435--454.

\bibitem{gorzelak2015methods}
Gorzelak, M.A., Gill, S. K., Tasnim, N.,  Ahmadi-Vand, Z., Jay, M.,
and Gibson, D. L.
 Methods for improving human gut microbiome data by reducing
  variability through sample processing and storage of stool. {\em
    PloS one 10} e0134802, 2015.




\bibitem{Holmes} Holmes, I., Harris, K., Quince, C. (2012).
Dirichlet multinomial mixtures: Generative models for microbial metagenomics.
{\em PLos One 7} 30126.

\bibitem{JolliffeTrendafilovUddin} Jolliffe, I. T.,  Trendafilov,
  N. T. and Uddin M. (2003). A Modified Principal Component Technique Based on the LASSO. { \em Journal of Computational and Graphical Statistics, 12}  531--547.



\bibitem{KnightsD} Knights, D., Kuczynski, J., Charlson, E. S. (2011). Bayesian
community-wide culture-independent microbial source tracking. {\em Nature Methods 8}, 761--763.

\bibitem{Kostic}
Kostic, A. D., Gevers, D., Siljander, H., Vatanen, T., Hy\"{o}tyl\"{a}inen, T.,
H\"{a}m\"{a}l\"{a}inen, A., Peet, A., Tillmann, V., P\"{o}h\"{o}, P., and Mattila, I. (2015).
The Dynamics of the Human Infant Gut Microbiome in Development
and in Progression Toward Type 1 Diabetes {\em Cell Host \& Microbe} 17,
260–273. 


 
\bibitem{LiPaltaShao} Li , L., Palta, M. and Shao, J. (2004).  A
  measurement error model with a Poisson distributed surrogate {\em Statistics in Medicine 23} 2527--2536.

\bibitem{li2015} Li., H. (2015) Microbiome, metagenomics, and high-dimensional compositional data analysis. {\em Annual Review of Statistics and Its Application 2} 73--94.

\bibitem{liudobribanSinger}Liu, L. T., Dobriban, E. and Singer,
  A. (2018) $e$PCA: High dimensional exponential family PCA. {\em
    Annals of Applied Statistics 12} 2121--2150.

\bibitem{Quince}
Quince, C., Lundin, E. E., Andreasson, A. N., Greco, D., Rafter, J., Talley, N.
J., Agreus, L., Andersson, A. F., Engstrand, L., and D’Amato, M. (2013).
The Impact of Crohn’s Disease Genes on Healthy Human Gut Microbiota:
A Pilot Study {\em Gut}, 62, 952–954. 


\bibitem{McMurdie}
McMurdie, P. and
Holmes, S. (2014) Waste not, want not: why rarefying microbiome data
is inadmissible. {\em PLoS Comput. Biol. 10}.


\bibitem{SalmonHarmanyDeledalleWillett} Salmon, J., Harmany, Z., Deledalle, C-A. and Willett, R. (2014). Poisson Noise Reduction with Non-local PCA {\em J. Math. Imaging Vis. 48}, 279--294.

\bibitem{Sek} Sekirov, I., Russell, S., Antunes, L. and Finlay , B. (2010).
Gut microbiota in health and disease. {\em Physiological reviews} 90, 859--904



\bibitem{BioMico} Shafiei, M., Dunn, K. A., Boon, E., MacDonald, S. M.,
  Walsh, D. A., Gu, H., Bielawski, J. P. (2015).
Biomico: a supervised bayesian model for inference of microbial
  community structure {\em Microbiome 3} 1.

\bibitem{BiomeNet}
Shafiei, M., Dunn, K., Chipman, H., Gu, H., Bielawski, J. P. (2014)
Biomenet: A bayesian model for inference of metabolic divergence among
  microbial communities.
{\em PLoS Computational Biology 10}, 1003918.


\bibitem{Turnbaugh}
Turnbaugh, P. J., Hamady, M., Yatsunenko, T., Cantarel, B. L., Duncan, A.,
Ley, R. E., Sogin, M. L., Jones, W. J., Roe, B. A., Affourtit, J. P., Egholm,
M., Henrissat, B., Heath, A. C., Knight, R., and Gordon, J. I. (2009). A
Core Gut Microbiome in Obese and Lean Twins, {\em Nature}, 457, 480–484.


\bibitem{ZouHastieTibshirani} Zou, H., Hastie, T., and Tibshirani, R. (2006) Sparse Principal
Component Analysis. {\em Journal of Computational and Graphical Statistics 15}, 262--286. 

\end{thebibliography}
\end{document}